\newcommand{\calc}{\ensuremath{{\cal C}}}
\newcommand{\sat}{{\rm SAT}}
\newcommand{\boldsat}{{\bf SAT}}
\newcommand{\sharpsat}{{\rm \#SAT}}
\newcommand{\sharpp}{{\rm \#P}}
\newcommand{\littlep}{{p}}
\newcommand{\manyone}{\ensuremath{\,\leq_{m}^{{\littlep}}\,}}
\newcommand{\p}{{\rm P}}
\newcommand{\np}{{\rm NP}}
\newcommand{\conp}{{\rm coNP}}
\newcommand{\sigmastar}{\ensuremath{\Sigma^\ast}}
\newcommand{\true}{\ensuremath{\mathrm{True}}}
\newcommand{\false}{\ensuremath{\mathrm{False}}}
\newtheorem{theorem}{Theorem}%
\newtheorem{lemma}{Lemma}
\newtheorem{fact}{Fact}
\newtheorem{challengetheorem}{Challenge Problem}
\newtheorem{definition}{Definition}
\newtheorem{example}{Example}
\begin{document}
\sloppy
\title{The Power of Self-Reducibility:\\Selectivity,
  Information, and Approximation\thanks{This \mbox{arXiv.org}
    report is a preliminary
    version of a chapter that will
    appear in the in-preparation book
    \emph{Complexity and Approximation},
    eds.~Ding-Zhu Du and Jie Wang, Springer.  This article was written
    in part while on sabbatical at Heinrich Heine University D\"usseldorf,
    supported in part by a Renewed Research Stay grant from the
  Alexander von Humboldt Foundation.}}
\date{February 21, 2019; revised March 14, 2019}
\author{Lane A. Hemaspaandra\\Department of Computer Science\\University of Rochester\\Rochester, NY 14627 USA} %
\maketitle

\emph{In memory of Ker-I Ko, whose indelible
contributions to computational
complexity included important work (e.g.,~\cite{ko-moo:j:approx,ko:j:maximum-value-CAREFUL-selman-did-left-cuts-first-see-comment,ko:j:self-reducibility-CAREFUL-selman-did-left-cuts-first-see-comment,ko:j:helping})
on each of this chapter's topics: self-reducibility, selectivity,
information, and approximation.}

\begin{abstract}
This chapter provides a hands-on tutorial on the important
technique known as self-reducibility.  Through
a series of ``Challenge Problems''
that are theorems that the reader will---after being given definitions and
tools---%
try to prove, the tutorial will ask the reader not to read proofs that
use self-reducibility, but rather to \emph{discover} proofs that use
self-reducibility.  In particular, the chapter will seek to guide the
reader to the discovery of proofs of four interesting
theorems---whose
focus areas range from selectivity to information to
approximation---from the literature, whose proofs draw on
self-reducibility.

The chapter's goal is to allow interested readers to add
self-reducibility to their collection of proof tools.  The chapter
simultaneously has a related but different goal, namely, to provide a
``lesson plan'' (and a coordinated set of slides is available online to
support this use~\cite{hem:url:power-of-self-reducibility-slides})
for a lecture to a two-lecture series
that can be given to undergraduate students---even those with no
background other than basic discrete mathematics and an 
understanding of what
polynomial-time computation is---to immerse them in hands-on proving,
and by doing that, to serve as an invitation to them to take
courses on Models of Computation or Complexity Theory.
\end{abstract}

\section{Introduction}
Section~\ref{ss:groups}
explains the two quite different audiences
that this chapter is intended for, and for each describes
how that group might use the chapter.  If you're not
a computer science professor it would
make sense to skip Section~\ref{sss:profs}, and if you
are a computer science professor you might at least on a first
reading choose to skip Section~\ref{sss:students}

Section~\ref{ss:self-and-sat} introduces the type of self-reducibility
that this chapter will focus on, and the chapter's central
set,
$\sat$
(the satisfiability problem for propositional Boolean formulas).

\subsection{A Note on the Two Audiences, and How to Read This Chapter}\label{ss:groups}
This chapter is unusual in that it has two intended audiences, and
those audiences differ dramatically in their amounts of background in
theoretical computer science.

\subsubsection{For Those Unfamiliar with Complexity Theory}\label{sss:students}
The main intended audience is those---most especially
young
students---who are not yet familiar with complexity theory, or perhaps
have not even yet taken a models of computation course.  If that
describes you, then this chapter is intended to be a few-hour tutorial
immersion in---and invitation to---the world of theoretical computer science
research.  As you go through this tutorial,
you'll try to solve---hands-on---research issues that are sufficiently
important that their original solutions
appeared in
some of theoretical computer science's best conferences and journals.

You'll be given the definitions and some tools before being asked to
try your hand at finding a solution to a problem.  And with luck, for
at least a few of our four
challenge
problems, you will find a
solution to the problem.  Even if you don't find a solution for
the given problem---and the problems increase in difficulty and
especially the later ones require
bold, flexible exploration to find
possible paths to the solution---the fact that you have spent time trying
to solve the problem  will give you more insight into the solution when
the solution is
then presented in this chapter.

A big-picture goal here is to make it clear that doing theoretical
computer science research is often about playful, creative, flexible
puzzle-solving.  The underlying hope here is that many people who
thought that theoretical computer science was intimidating and
something that they could never do or even understand will realize that
they can do theoretical computer science and perhaps even that they
(gasp!)~\emph{enjoy} doing theoretical computer science.  

The four problems also are tacitly bringing out a different issue, one
more specifically about complexity.  Most people, and even most
computer science professors, think that complexity theory is
extraordinarily abstract and hard to grasp.  Yet in each of our four
challenge problems, we'll see that doing complexity is often extremely
concrete, and in fact is about building a program that solves a given
problem.  Building programs is something that many people already have done, e.g.,
anyone who has taken an introduction to programming course or
a data structures course.  The only
difference in the programs one builds when doing proofs in complexity
theory is that the programs one builds typically draw on some
hypothesis that provides a piece of the program's action.  For
example, our fourth challenge problem will be to show that if a
certain problem is easy to approximate, then it can be solved exactly.
So your task, when solving it, will be to write a program that
exactly solves the problem.  But in writing your program you will
assume that you have as a black box that you can draw on as a program (a
subroutine) that given an instance of the problem gives an approximate
solution.

This view that complexity is largely about something that is quite
concrete, namely building programs, in fact is the basis of an entire
graduate-level complexity-theory textbook~\cite{hem-ogi:b:companion},
in which the situation is described as follows:

\begin{quote}
Most people view complexity theory as an
arcane
realm populated by pointy-hatted (if not indeed pointy-headed)
sorcerers stirring cauldrons of recursion theory with wands of
combinatorics, while chanting incantations involving complexity
classes whose very names contain hundreds of characters and sear the
tongues of mere mortals.  This stereotype has sprung up in part due to
the small amount of esoteric research that fits this bill, but the
stereotype is more strongly attributable to the failure of complexity
theorists to communicate in expository forums the central role that
algorithms play in complexity theory.
\end{quote}

\paragraph{Expected Background} To keep this chapter as accessible as
possible, the amount of expected background has been kept quite small.
But
there are
some types of background that are being assumed here.  The reader is
assumed to know following material, which would typically be
learned within about the first
two courses of most computer science departments' introductory course
sequences.
\begin{enumerate}
\item  What a polynomial is.

  As an example,
$p(n) = n^{1492} + 42n^{42} + 13$ is a polynomial; $e(n) = 2^n$ is
not.

\item
What it means for a set or function to be computable in
polynomial time, i.e., to be computed in time polynomial in the number
of bits in the input to the problem.  The class of all sets that can
be computed in polynomial time is denoted P, and is one of the most
important classes in computer science.

As an example, the set of all
positive integers that are multiples of 10 is a set that belongs to
$\p$.

\item Some basics of logic such as the meaning
  of quantifiers ($\exists$ and $\forall$) and what a (propositional) Boolean
  formula is.

  As an example of the latter,
the formula $x_1 \land (x_2 \lor \overline{x_3})$ is a such a formula,
and evaluates as $\true$---with each of $x_1$, $x_2$, and $x_3$ being variables
whose potential values are $\true$ or $\false$---exactly if $x_1$ is
$\true$ and either $x_2$ is $\true$ or the negation of $x_3$ is $\true$.
\end{enumerate}

If you have that background in hand, wonderful!  You have the
background to tackle this chapter's puzzles and challenges, and please
(unless you're a professor thinking of modeling a lecture series
on this chapter) skip from here right on to Section~\ref{ss:self-and-sat}.

\subsubsection{For Computer Science Professors}\label{sss:profs}
Precisely because this chapter is designed to be accessible and fun
for students who don't have a background in theoretical computer science,
the chapter avoids---until Section~\ref{c:add-complexity}---trying to abstract
away from the focus on $\sat$.  In particular, this chapter either avoids mentioning
complexity class names such as NP, coNP, and PSPACE, or at least, when it
does mention them, uses phrasings such as ``classes known as'' to make clear
that the reader is not expected to possess that knowledge.

Despite that, computer science professors are very much an intended
audience for this chapter, though in a way that is reflecting the fact
that the real target audience
for these challenges 
is young students.
In particular, in addition to providing a tutorial introduction for
students, of the flavor described in Section~\ref{sss:students}, this
chapter also has as its goal to provide to you, as a teacher, a
``lesson plan'' to help you offer in your course a one- or two-day
lecture (but really hands-on workshop) sequence\footnote{To cover all
  four problems would take two class sessions.  Covering just
  the first two or
  perhaps three of the problems could be done in a single 75-minute
  class session.} in which you present the definitions and tools of
the first of these problems, and then ask the students to break into
groups and in groups spend about 10--25 minutes working on solving the
problem,\footnote{In this chapter, since student readers of the
  chapter will be working as individuals, I suggest to the reader, for
  most of the problems, longer amounts of time.  But in a classroom
  setting where students are working in groups, 10--25 minutes may be
  an appropriate amount of time; perhaps 10 minutes for the first
  challenge problem, 15 for the
  second, 15 for the third, and 25 for the fourth.  You'll need to
  yourself judge the time amounts that are best, based on your knowledge of your students.  For
  many classes, the just-mentioned times will not be enough.
  Myself, I try to keep
  track of whether the groups seem to have found an answer, and I will
  sometimes stretch out the time window if many groups seem to be
  still working intensely and with interest.  Also, if TAs happen to
  be available who don't already know the answers, I may assign them to
  groups so that the class's groups will have more experienced members, though
the TAs do know to guide rather than dominate a group's discussions.}  and then you ask
whether some group has found a solution and would like to present it
to the class, and if so you and the class listen to and if needed
correct the solution (and if no group found a solution, you and the class
will work together to reach a solution).
And then you go on to similarly treat the other
three problems, again with the class working in teams.  This provides
students with a hands-on immersion in team-based, on-the-spot
theorem-proving---something that most students never get in class.  I've done this
in classes of size up to 79 students, and they love it.  The approach
does not treat them as receptors of information lectured at them, but
rather shows them that they too can make discoveries---even ones that
when first obtained appeared in such top forums as \emph{CCC} (the
yearly \emph{Computational Complexity Conference}), \emph{ICALP} (the
yearly \emph{International Colloquium on Automata, Languages, and
  Programming}), the journal \emph{Information and Computation}, and
\emph{SIAM Journal on Computing}.

To support this use of the chapter as a teaching tool in class, I have
made publicly available a set of \LaTeX/Beamer slides that can be used
for a one- or two-class hands-on workshop series on this chapter.  The
slides are available
online~\cite{hem:url:power-of-self-reducibility-slides}, both as
pdf slides and, for teachers who might wish to modify the slides,
as 
a zip archive of
the source files.

Since the slides might be used in courses
where students already do know of such classes as NP and coNP, the
slides don't defer the use of those classes as
aggressively 
as this
chapter itself does.  But the slides are designed so that the mention
of the connections to those classes is parenthetical (literally
so---typically a parenthetical, at the end of a theorem statement,
noting the more general application of the claim to all of NP or
all of $\conp$), and those
parentheticals can be simply skipped over.  Also, the slides define on the
fly both NP and coNP, so that if you do wish to cover the more general
versions of the theorems, the slides will support that too.

The slides don't themselves present the solutions to
Challenge
Problems 1, 2, or~3.  Rather, they assume that one of the class's
groups will present a solution on the board (or document camera) or
will speak the solution with the professor acting as a scribe at the
board or the document camera.
Challenge
Problems 1, 2, and~3
are doable enough
that usually at least one group will either have solved the question,
or at least will made enough progress
that,
with
some help from classmates or some hints/help from the
professor, a solution can quickly be reached building on
the group's work.  (The professor
ideally should
have read the solutions in this chapter to those problems, so that
even if a solution isn't reached or almost reached by the students on one
or two of those problems, the professor can provide a solution at the board or
document camera.  However, in the ideal case, the solutions of those
problems will be heavily student-driven and won't need much, if any,
professorial steering.)

Challenge Problem~4 is sufficiently hard that the slides do include
both a slide explaining why a certain very natural approach---which is
the one students often (quite reasonably) try to make work---cannot
possibly work, and thus why the approach that the slides gave to
students as a gentle, oblique
hint may be the way to go, and then the slides present a
solution along those lines.

The difficulty of Challenge Problem~4 has
a point.  Although
this chapter is trying to show students that they
\emph{can} do theory research, there is also an obligation not to give
an artificial sense that all problems are easily solved.  Challenge
Problem~4 shows students that some problems can have multiple twists
and turns on the way to finding a solution.  Ideally, the students
won't be put off by this, but rather will appreciate both that solving
problems is something they can do, and that in doing so one may well
run into obstacles that will take some out-of-the-box thinking to try
to get around---obstacles that might take not minutes of thought but rather
hours or days or more, as well as working closely with others
to share ideas as to what might work.

\subsection{Self-Reducibility and SAT}\label{ss:self-and-sat}
Now that you have read whatever part of
Section~\ref{ss:groups} applied to you, to get the lay of the land
as to what this chapter is trying to provide you, 
let us discuss the
approach that will be our lodestar throughout this
chapter.

One of the most central paradigms of computer science is ``divide and
conquer.''  Some of the most powerful realizations of that approach
occur though the use of what is known as self-reducibility, which is
our chapter's central focus.

Loosely put, a set is self-reducible if any membership question
regarding the set can be easily resolved by asking (perhaps more than
one) membership questions about smaller strings.

That certainly
divides, but does it conquer?

The answer varies greatly depending on the setting.  Self-reducibility
itself, depending on which polynomial-time variant one is looking at,
gives upper bounds on a set's complexity.  However, those
bounds---which in some cases are the complexity classes known as NP
and PSPACE---are nowhere near to putting the set into deterministic
polynomial time (aka,~$\p$).

The magic of self-reducibility, however, comes when one adds another
ingredient to one's stew.
Often, one can prove that if a set 
is self-reducible \emph{and has some other property regarding its
  structure}, then the set \emph{is} feasible, i.e., is in
deterministic polynomial time.

This tutorial will ask the reader to---and help the reader
to---discover for him- or herself the famous proofs of three such
magic cases (due to Selman, Berman, and Fortune), and then of a fourth
case that is about the ``counting'' analogue of what was described in the
previous paragraph.

Beyond that, I hope you'll keep the tool/technique
of self-reducibility in mind
for the rest of your year, decade, and lifetime---and on each new
challenge will spend at least a
few moments asking, ``Can self-reducibility
play a helpful role in my study of this problem?''  And with luck, sooner
or later, the answer may be, ``Yes!
Goodness\dots~what a surprise!''

Throughout this chapter, our model language (set) will be ``$\sat$,'' i.e., the
question of whether a given Boolean formula, for some way of assigning
each of its variables to $\true$ or to $\false$, evaluates to $\true$.
$\sat$ is a central problem in computer science, and possesses a
strong form of self-reducibility.  As a quiet bonus, though we won't
focus on this in our main traversal of the problems and their
solutions, $\sat$ has certain ``completeness'' properties that make
results proven about $\sat$ often yield results for an entire
important class of problems known as the ``NP-complete'' sets; for
those interested in that, Section~\ref{c:add-complexity}, ``Going Big:
 Complexity-Class Implications,''
briefly covers that broader view.

\section{Definitions Used Throughout: SAT and Self-Reducibility}

The game plan of this chapter, as mentioned above, is this:
For each of the four challenge problems (theorems),
you will be given definitions and some other background or tools.
Then the challenge problem (theorem) will be stated, and you'll be
asked to try to solve it, that is, you'll be asked to prove the theorem.
After you do, or after you hit a wall so completely that you feel you
can't solve the theorem even with additional
time, you'll read a proof of
the theorem.  Each of the four challenge problems has an appendix
presenting a proof of the result.

But before we start on the problems, we need to define
$\sat$ and discuss its self-reducibility.

\begin{definition}
  $\sat$ is the set of all satisfiable (propositional) Boolean formulas.
\end{definition}

\begin{example}
  \begin{enumerate}
  \item
    $x \land \overline{x} \not\in \sat$, since neither of the two possible
  assignments to $x$ causes the formula to evaluate to $\true$.
\item  $(x_1
 \land x_2 \land \overline{x_3}) \lor (x_4 \land \overline{x_4}) \in \sat$,
since
that formula evaluates to
$\true$ under at least one of the eight possible ways that the four variables
can each be assigned to be $\true$ or $\false$.  For example, when we take $x_1 = x_2 = x_4 = \true$ and $x_3 = \false$, the formula evaluates to $\true$.
\end{enumerate} 
\end{example}
$\sat$ has the following ``divide and conquer'' property.

\begin{fact}[2-disjunctive length-decreasing self-reducibility]
  Let $k \geq 1$.  Let $F(x_1, x_2, \dots,x_k)$ be a
 Boolean formula (without loss of generality, assume that each of the variables
  actually
  occurs in the formula). Then
  \[
    F(x_1, x_2, \dots,x_k) \in \sat \Longleftrightarrow
    \bigl(F(\true, x_2, \dots,x_k) \in \sat
    \,\lor\,
    F(\false, x_2, \dots,x_k) \in \sat\bigr).\]
\end{fact}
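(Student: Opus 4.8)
The plan is to prove the biconditional by a direct case analysis on the truth value that a satisfying assignment gives to the first variable, $x_1$. The underlying intuition is that any assignment to the variables of $F$ must commit $x_1$ to exactly one of $\true$ or $\false$, so satisfiability of $F$ ought to split cleanly into the two corresponding restricted problems, and the ``$2$-disjunctive'' shape of the right-hand side is exactly this split.

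For the forward direction ($\Rightarrow$), I would start by assuming $F(x_1, \dots, x_k) \in \sat$, so that there is some assignment $\alpha$ to $x_1, \dots, x_k$ under which $F$ evaluates to $\true$. I would then split on the value $\alpha$ assigns to $x_1$. If $\alpha(x_1) = \true$, then the restriction of $\alpha$ to $x_2, \dots, x_k$ witnesses that $F(\true, x_2, \dots, x_k) \in \sat$; if instead $\alpha(x_1) = \false$, the identical reasoning gives $F(\false, x_2, \dots, x_k) \in \sat$. In either case the right-hand disjunction holds.

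For the backward direction ($\Leftarrow$), I would assume the disjunction on the right and again split into the two cases. If $F(\true, x_2, \dots, x_k) \in \sat$, there is an assignment $\beta$ to $x_2, \dots, x_k$ satisfying it; extending $\beta$ by additionally setting $x_1 := \true$ then yields an assignment satisfying $F(x_1, \dots, x_k)$, so $F \in \sat$. The case $F(\false, x_2, \dots, x_k) \in \sat$ is symmetric, using $x_1 := \false$. Either way $F \in \sat$, which completes the equivalence.

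The only point that needs a little care---and the closest thing to an obstacle in an otherwise elementary argument---is the base case $k = 1$, where the two restricted formulas $F(\true)$ and $F(\false)$ contain no variables and are simply the constants $\true$ or $\false$. Here I would observe that a variable-free formula lies in $\sat$ exactly when it evaluates to $\true$ (with the empty assignment serving as its witness), so the case analysis above goes through unchanged. I would also note that the stated assumption that every variable actually occurs in $F$ is \emph{not} needed for the correctness of this equivalence---it matters only for the ``length-decreasing'' aspect when the fact is applied recursively---so I would not invoke it in the proof itself.
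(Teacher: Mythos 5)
Your proof is correct, and it is worth noting that the paper itself offers no proof of this Fact at all---it is stated as a self-evident property of $\sat$, and the surrounding text immediately moves on to using it. So there is no ``paper approach'' to compare against; what you have done is supply the elementary justification the paper leaves implicit, and it is exactly the right one: case analysis on the value a satisfying assignment gives to $x_1$ for the forward direction, and extension of a witnessing assignment by $x_1 := \true$ or $x_1 := \false$ for the backward direction. Your two side remarks are also accurate and show good judgment: a variable-free formula is in $\sat$ precisely when it evaluates to $\true$ (this matches the convention the paper adopts in its solution appendices), and the hypothesis that every variable actually occurs in $F$ plays no role in the equivalence itself---it is there only so that the substitution genuinely shrinks the formula, which is what the ``length-decreasing'' label refers to and what the pruning arguments later in the paper rely on.
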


The above fact says that
\emph{SAT is self-reducible} (in particular, in the
lingo, it says that $\sat$ is
2-disjunctive
length-decreasing self-reducible).

Note: We won't at all focus in this chapter
on details of the encoding of
formulas and other objects.  That indeed is an issue if one wants to
do an utterly detailed discussion/proof.  But the issue is not a
particularly interesting one, and certainly can be put to the side
during a first traversal, such as that which this chapter is inviting
you to make.

\paragraph{A Bit of History}
In this chapter, we typically won't focus much on references.  It is best
to immerse oneself in the challenges, without getting
overwhelmed with a lot of detailed historical context.  However, so that those
who are interested in history can have some sense of the history, and so that
those who invented the concepts and proved the theorems are property
credited, we will for most sections have an ``A Bit of History''
paragraph that extremely briefly gives literature citations and
sometimes a bit of history and context.
As to the present section,
self-reducibility
dates back to
the 1970s, and in particular is due to the work of
Schnorr~\cite{sch:c:self-reducible}
and Meyer and Paterson~\cite{mey-pat:t:int}.

\section{Challenge Problem 1: Is SAT even \protect\textit{Semi}-Feasible?%
}
Pretty much no one believes that
$\sat$ has a polynomial-time decision algorithm, i.e.,
that $\sat\in\p$~\cite{gas:j:third-p-vs-np-poll}.  This section
asks you to show that it even is unlikely that $\sat$ has a
polynomial-time \emph{semi-decision} algorithm---a
polynomial-time algorithm that, given any two
formulas, always outputs one of them and does so in such a way
that if at least one of the input formulas is satisfiable then
the formula that is output is satisfiable.

\subsection{Needed Definitions}

  A set $L$ is said to be feasible (in the sense of belonging to P) if
  there is a polynomial-time algorithm that decides membership in $L$.

    A set is said to be semi-feasible (aka P-selective) if
    there is a polynomial-time algorithm that semi-decides membership,
    i.e., that given any two strings, outputs one that is
    ``more likely'' to be in the set
    (to be formally cleaner, since the probabilities
    are all 0 and 1 and can tie, what is really meant
    is ``no less likely'' to be in the set).  The following
    definition makes this formal.  (Here and elsewhere,
    $\Sigma$ will denote our (finite) alphabet and $\sigmastar$ will denote the
    set of finite strings over that alphabet.)

    \begin{definition}\label{d:p-sel}
      A set $L$ is $\p$-selective if there exists a polynomial-time function,
      $f: \sigmastar \times \sigmastar \rightarrow \sigmastar$ such
      that, 
      \[ (\forall a,b \in \sigmastar)[
      f(a,b) \in \{a,b\} \land
      \bigl(  \{a,b\} \cap L \neq \emptyset \implies f(a,b) \in L\bigr)].\]
    \end{definition}

    It is known that some P-selective sets can be very hard.  Some
    even have the property known as being ``undecidable.''  Despite
    that, our first challenge problem is to prove that $\sat$ cannot
    be P-selective unless $\sat$ is outright easy computationally,
    i.e., $\sat\in\p$.  Since it is close to an article of faith in
    computer science that $\sat\not\in\p$, showing that some
    hypothesis implies that $\sat \in \p$ is considered, with the full
    weight of modern computer science's current understanding and
    intuition, to be extremely strong evidence that the hypothesis is
    unlikely to be true.  (In the lingo, the hypothesis is implying
    that $\p = \np$.  Although it is possible that $\p=\np$ is true,
    basically no one believes that it
    is~\cite{gas:j:third-p-vs-np-poll}.  However, the issue is the
    most important open issue in applied mathematics, and there is
    currently a \$1,000,000 prize for whoever resolves the
    issue~\cite{cla:url:p-vs-np-prize}.)

\paragraph{A Bit of History}
Inspired by an analogue in computability theory,
P-selectivity was defined by Alan L. Selman in a seminal series of
papers~\cite{sel:j:pselective-tally,sel:j:some-observations-psel,sel:j:reductions-pselective,sel:j:ana}, which
included a proof of our first challenge
theorem.  The fact, alluded to above, that P-selective sets can be
very hard is due to Alan L. Selman's above work and to the work of the
researcher in whose memory this chapter is written, Ker-I
Ko~\cite{ko:j:maximum-value-CAREFUL-selman-did-left-cuts-first-see-comment}.
In that same paper, Ko also did very important early work
showing that P-selective sets are unlikely to have what is known as
``small circuits.''  For those particularly interested
in the P-selective sets, they and their nondeterministic
cousins are the subject of a book,
\emph{Theory of Semi-Feasible
Algorithms}~\cite{hem-tor:b:semifeasible-computation}.

\subsection{ Can SAT Be P-Selective?}

\begin{challengetheorem}
  (Prove that) if $\sat$ is $\p$-selective, then $\sat \in \p$.
\end{challengetheorem}

Keep in mind that what you should be trying to do is this.
You may assume that $\sat$ is P-selective.  So you may act as if you have
in hand a polynomial-time computable function, $f$, that in the sense of
Definition~\ref{d:p-sel} shows that $\sat$ is P-selective.  And your
task is to give a polynomial-time algorithm for $\sat$, i.e.,
a program that in time polynomial in the number of bits in its input
determines whether the input string belongs to $\sat$.  (Your algorithm
surely will be making calls to $f$---possibly quite a few calls.)

So that you have them easily at hand while working on this,
here are some of the key definitions and tools that you might want
to draw on while trying to prove this theorem.

\begin{description}
  \item[SAT] $\sat$ is the set of all satisfiable (propositional) Boolean formulas.

    \item[Self-reducibility]   Let $k \geq 1$.  Let $F(x_1, x_2, \dots,x_k)$ be a
  Boolean formula (without loss of generality assume that each of the variables
  occurs in the formula). Then $
    F(x_1, x_2, \dots,x_k) \in \sat \Longleftrightarrow
    \bigl(F(\true, x_2, \dots,x_k) \in \sat
    \lor
    F(\false, x_2, \dots,x_k) \in \sat\bigr)$.

  \item[P-selectivity]       A set $L$ is $\p$-selective if there exists a polynomial-time function,
      $f: \sigmastar \times \sigmastar \rightarrow \sigmastar$ such
      that, 
      $ (\forall a,b \in \sigmastar)[
      f(a,b) \in \{a,b\} \land
      \bigl(  \{a,b\} \cap L \neq \emptyset \implies f(a,b) \in L\bigr)]$.
    \end{description}

    My suggestion to you would be to work on proving this theorem
    until either you find a proof, or you've put in at least 20 minutes
    of thought, are stuck, and don't think that more time will be helpful.

    When you've reached one or the other of those states, please go on to
    Appendix~\ref{s:solution-1} to read a proof of this theorem.
    Having that proof will help you check whether your proof (if you
    found one) is
    correct,
    and if you did not find a proof,
    will show you a proof.  Knowing the answer to this challenge
    problem before going on to the other three challenge problems is
    important, since an aspect of this problem's solution will
    show up in the solutions to the other challenge problems.

    I've put the solutions in separate appendix sections
so that you can avoid
    accidentally seeing them before you wish to.  But please do
    (unless you are completely certain that your solution to the first
    problem is correct) read the solution for the first problem before
    moving on to the second problem.  If you did not find a proof for
    this first challenge problem, don't feel bad; everyone has days
    when we see things and days when we don't.  On the
    other hand, if you did find a
    proof of this first challenge theorem, wonderful, and if you felt
    that it was easy, well, the four problems get steadily harder, until by
    the fourth problem almost anyone would have to work very, very
    hard and be a bit lucky to find a solution.

  \section{Challenge Problem 2: Low Information Content and SAT, Part 1:
Can SAT Reduce to a Tally Set?}

Can $\sat$
have low information
  content?  To answer that, one needs to formalize what notion of low
  information content one wishes to study.  There are many such notions,
  but a particularly important one is 
  whether a given set can ``many-one polynomial-time reduce'' to
  a tally set (a set over a 1-letter alphabet).

\paragraph{A Bit of History}
Our second challenge theorem was stated and proved by
Piotr Berman~\cite{ber:c:relate}.  Berman's paper started a remarkably
long and productive line of work, which we will discuss in more
detail
in the ``A Bit of History'' note accompanying the third
challenge problem.  That same note will 
provide pointers to surveys of that line of work,
for 
those interested in additional reading.

\subsection{Needed Definitions}

$\epsilon$ will denote the empty string.

\begin{definition}
  A set $T$ is a tally set if $T \subseteq \{\epsilon,0,00,000,\dots\}$.
\end{definition}

\begin{definition}\label{d:reduce}
  We say that $A \manyone B$ ($A$ many-one polynomial-time reduces to $B$)
  if there is a polynomial-time computable 
function $g$ such that
  \[(\forall x \in \sigmastar)[x\in A \iff g(x) \in B].\]
\end{definition}
Informally, this says that $B$ is so powerful that each membership
query to $A$ can be efficiently transformed into a membership
query to $B$ that gets the same answer as would the question regarding
membership in~$A$.

\subsection{Can SAT Reduce to a Tally Set?}

\begin{challengetheorem}
  (Prove that) if there exists a tally set $T$
  such that $\sat \manyone T$, then $\sat \in \p$. 
\end{challengetheorem}

Keep in mind that what you should be trying to do is this.
You may assume that there exists a tally set $T$ such that 
$\sat \manyone T$.  You may not assume that you have a polynomial-time
algorithm for $T$; you are assuming that $T$ exists, but for all we know,
$T$ might well be very hard.
On the other hand, you \emph{may}
assume that you have in hand a polynomial-time computable function $g$ that reduces
from $\sat$ to $T$ in the sense of Definition~\ref{d:reduce}.
(After all, that reduction is (if it exists) a finite-sized program.)
Your
task here is to give a polynomial-time algorithm for $\sat$, i.e.,
a program that in time polynomial in the number of bits in its input
determines whether the input string belongs to $\sat$.  (Your algorithm
surely will be making calls to $g$---possibly quite a few calls.)

So that you have them easily at hand while working on this,
here are some of the key definitions and tools that you might want
to draw on while trying to prove this theorem.

\begin{description}
  \item[SAT] $\sat$ is the set of all satisfiable (propositional) Boolean formulas.

    \item[Self-reducibility]   Let $k \geq 1$.  Let $F(x_1, x_2, \dots,x_k)$ be a
  Boolean formula (without loss of generality assume that each of the $x_i$ actually
  occurs in the formula). Then $
    F(x_1, x_2, \dots,x_k) \in \sat \Longleftrightarrow
    \bigl(F(\true, x_2, \dots,x_k) \in \sat
    \lor
    F(\false, x_2, \dots,x_k) \in \sat\bigr)$.

\item[Tally sets]
A set $T$ is a tally set if $T \subseteq \{\epsilon,0,00,000,\dots\}$.

\item[Many-one reductions]
We say that $A \manyone B$ 
if there is a polynomial-time computable function $g$ such that
$(\forall x \in \sigmastar)[x\in A \iff g(x) \in B]$.
\end{description}

    My suggestion to you would be to work on proving this theorem
    until either you find a proof, or you've put in at least 30 minutes
    of thought, are stuck, and don't think that more time will be helpful.

    When you've reached one or the other of those states, please go on to
    Appendix~\ref{s:solution-2} to read a proof of this theorem.
    The solution to the third challenge problem is an extension of
    this problem's solution, so 
    knowing the answer to this challenge
    problem before going on to the third  challenge problem is
    important.

  \section{Challenge Problem 3: Low Information Content and SAT, Part 2: Can $\overline{\boldsat}$ Reduce to a Sparse Set?}

  This problem challenges you to show that even a class of sets that
  is far broader than the tally sets, namely, the so-called sparse
  sets, cannot be reduced to from $\overline{\sat}$ unless $\sat\in\p$.

\paragraph{A Bit of History}
This third challenge problem was stated and proved by Steve
Fortune~\cite{for:j:sparse}.  It was another step in what was a long
line of advances---employing more and more creative and sometimes
difficult proofs---that eventually led to the understanding that, unless
$\sat \in \p$, no sparse set can be hard for $\sat$ even with respect
to extremely flexible types of reductions. 
The most famous result within this line is known as
Mahaney's Theorem: If there is a sparse set $S$  such
that $\sat \manyone S$, then $\sat \in \p$~\cite{mah:j:sparse-complete}.
There are many surveys of
the just-mentioned line of work,
e.g.,~\cite{mah:b:sparse,mah:b:icss,you:j:sparse,hem-ogi-wat:c:sparse}.
The currently strongest result in that line is due
to Gla{\ss}er~\cite{gla:t:sparse} (see the survey/treatment of 
that in~\cite{gla-hem:j:clarityII}, and see also the 
results of Arvind et
al.~\cite{arv-han-hem-koe-loz-mun-ogi-sch-sil-thi:b:sparse}).

\subsection{Needed Definitions}

    Let $\|S\|$ denote the cardinality of set $S$, e.g., $\|\{\epsilon,0,0,0,00\}\| = 3$.

    For any set $L$, let $\overline{L}$ denote the complement of $L$.

 Let $|x|$ denote the length string $x$, e.g., $|\text{moon}| = 4$.

 \begin{definition}\label{d:sparse}
   A set $S$ is sparse  if there exists a polynomial $q$
   such that, for each natural number $n\in \{0,1,2,\dots\},
  $ it holds that
  \[\| \{ x \mid x \in S  \land |x| \leq n\} \| \leq q(n).\]
\end{definition}
Informally put,
  the sparse sets are the sets whose number of strings up to a given
  length is at most polynomial.
  $\{0,1\}^*$ is, for example, not a sparse
  set, since up to length $n$ it has $2^{n+1}-1$ strings.
  But all tally sets are sparse, indeed all via the bounding polynomial
  $q(n) = n+1$.

\subsection{Can $\overline{\boldsat}$ Reduce to a Sparse  Set?}

\begin{challengetheorem}
  (Prove that) if there exists a sparse set $S$
  such that $\overline{\sat} \manyone S$, then $\sat \in \p$.
\end{challengetheorem}

Keep in mind that what you should be trying to do is this.  You may
assume that there exists a sparse set $S$ such that $\overline{\sat} \manyone S$.
You may not assume that you have a polynomial-time algorithm for $S$; you
are assuming that $S$ exists, but for all we know, $S$ might well be very
hard.
On the other
hand, you \emph{may} assume that you have in hand a polynomial-time
computable function $g$ that reduces from $\overline{\sat}$ to $S$ in the sense
of Definition~\ref{d:reduce}.  (After all, that reduction is---if it exists---a
finite-sized program.)
And you may assume that you have in hand a polynomial that upper-bounds
the sparseness of $S$ in the sense of Definition~\ref{d:sparse}.
(After all, one of the countably infinite list of simple polynomials
$n^k +k$---for $k = 1,2,3,\dots$---will provide such an upper bound,
if any such polynomial upper bound exists.)
Your task here is to give a polynomial-time
algorithm for $\sat$, i.e., a program that in time polynomial in the
number of bits in its input determines whether the input string belongs to
$\sat$.  (Your algorithm surely will be making calls to $g$---possibly
quite a few calls.)

One might wonder why I just said that you should
build a polynomial-time algorithm for $\sat$,
given that the theorem speaks of $\overline{\sat}$.
However, since it is clear that
$\sat \in \p \iff \overline{\sat} \in \p$ (namely, given a
polynomial-time algorithm for $\sat$,
if we
simply
reverse
the answer on each input, then we now have a 
polynomial-time algorithm for $\overline{\sat}$), it is legal to focus
on $\sat$---and most people find doing so more natural and intuitive.

Do be careful here.  Solving
this challenge problem
may take an ``aha!\ldots~insight''~moment.
Knowing the solution to Challenge Problem 2 will be a
help here, but even with that knowledge in hand
one hits an obstacle.  And then the challenge
is to find a way around that obstacle.

So that you have them easily at hand while working on this,
here are some of the key definitions and tools that you might want
to draw on while trying to prove this theorem.
\begin{description}
\item[SAT and \boldmath$\rm\overline{SAT}$] $\sat$ is the set of all
  satisfiable (propositional) Boolean formulas.
  $\overline{\sat}$
  denotes the complement of $\sat$.%

 \item[Self-reducibility]   Let $k \geq 1$.  Let $F(x_1, x_2, \dots,x_k)$ be a
  Boolean formula (without loss of generality assume that each of the $x_i$ actually
  occurs in the formula). Then $
    F(x_1, x_2, \dots,x_k) \in \sat \Longleftrightarrow
    \bigl(F(\true, x_2, \dots,x_k) \in \sat
    \lor
    F(\false, x_2, \dots,x_k) \in \sat\bigr)$.

\item[Sparse sets]
  A set $S$ is sparse if there exists a polynomial $q$
  such that, for each natural number $n\in \{0,1,2,\dots\}$, it holds that
  $\| \{ x \mid x \in S  \land |x| \leq n\} \| \leq q(n)$.

\item[Many-one reductions]
We say that $A \manyone B$ 
if there is a polynomial-time computable function $g$ such that
$(\forall x \in \sigmastar)[x\in A \iff g(x) \in B]$.
\end{description}

    My suggestion to you would be to work on proving this theorem
    until either you find a proof, or you've put in at least 40 minutes
    of thought, are stuck, and don't think that more time will be helpful.

    When you've reached one or the other of those states, please go on to
    Appendix~\ref{s:solution-3} to read a proof of this theorem.

    \section{\boldmath Challenge Problem 4: Is \#SAT as Hard to (Enumeratively) Approximate as
      It Is to Solve Exactly?}
  This final challenge is harder than the three previous  ones. 
  To solve it, you'll have to
  have multiple insights---as to what
  approach to use, what building blocks to use, and how to use them.

  The problem is sufficiently hard that the solution is structured to
  give you, if you did not solve the problem already, a second bite at
  the apple!  That is, the solution---after discussing why the problem
  can be hard to solve---gives a very big hint, and then invites you to re-try
  to problem with that hint in hand.

\paragraph{A Bit of History}
The function
$\sharpsat$, the counting version of $\sat$, will play a central role
in this  challenge problem.  $\sharpsat$
was introduced and
studied by Valiant~\cite{val:j:permanent,val:j:enumeration}.
This final 
challenge problem, its proof (including the lemma given
in the solution I give here and the proof of that lemma),
and the notion of enumerators
and enumerative approximation are due to Cai and
Hemachandra~\cite{cai-hem:j:enum}.  The challenge problem is a weaker
version of the main result of that paper, which proves the result for
not just 2-enumerators but even for sublinear-enumerators; later work 
showed that the result even holds for all polynomial-time
computable enumerators~\cite{%
cai-hem:j:approx2}.

\subsection{Needed Definitions}

$|F|$ will denote the length of (the encoding of) formula $F$.

    $\sharpsat$ is the function that given as input a
    Boolean formula $F(x_1,x_2,\dots,x_k)$---without loss of generality assume that each
    of the variables occurs in $F$---outputs the number of
    satisfying assignments the formula has (i.e., of the $2^k$
    possible assignments of the variables to $\true$/$\false$, the number of
    those under which $F$ evaluates to $\true$; so the output will be
    a natural number in the interval $[0,2^k]$).  For example,
    $\sharpsat(x_1 \lor x_2) = 3$ and 
    $\sharpsat(x_1 \land \overline{x_1}) = 0$.

    \begin{definition} We say that $\sharpsat$ {has a polynomial-time
        $2$-enumerator} (aka, is polynomial-time 2-enumerably
      approximable) if there is a polynomial-time computable function $h$ such
      that on each input $x$,
      \begin{enumerate} \item $h(x)$ outputs a list of two
        (perhaps identical) natural numbers, and
        \item $\sharpsat(x)$
          appears in the list output by $h(x)$.
        \end{enumerate}
      \end{definition}
So a 2-enumerator $h$ outputs a 
    list of (at most) two
    candidate values for the value of $\sharpsat$ on the given
    input, and
    the actual output is always somewhere in that list.  This notion
    generalizes in the natural way to other list cardinalities, e.g.,
        $1492$-enumerators and, for each $k\in \{1,2,3,\dots\}$,
    $\max(1, |F|^k)$-enumerators.

\subsection{Food for Thought}\label{ss:food}

 You'll certainly want to use some analogue of the key
      self-reducibility observation, except now respun by you to
      be about the number of solutions of a formula and how it
      relates to or is determined by the number of solutions of
      its two ``child'' formulas.

But doing that is just the first step your quest.
      So\ldots~please play around with
      ideas and approaches.  Don't be afraid to be bold and ambitious.
      For example, you  might say ``Hmmmm, if we could do/build
      XYZ (where perhaps XYZ might be some particular insight about
      combining formulas), that would be a powerful tool in solving
      this, and I suspect we can do/build XYZ\@.''  And then you
      might want to work both on building XYZ
      and on showing in detail how, if you did
      have tool XYZ in hand, you could use it to show the theorem.

\subsection{Is \#SAT as Hard to (Enumeratively) Approximate as It Is to Solve Exactly?}

\begin{challengetheorem}[Cai and Hemachandra]
  (Prove that) 
  if $\sharpsat$ has a polynomial-time $2$-enumerator, then there is
  a polynomial-time algorithm for $\sharpsat$.
\end{challengetheorem}

Keep in mind that what you should be trying to do is this.  You may
assume that you have in hand a polynomial-time 2-enumerator for
$\sharpsat$.
Your task here is to give a polynomial-time
algorithm for $\sharpsat$, i.e., a program that in time polynomial in the
number of bits in its input determines the number of satisfying assignments
of 
the (formula encoded by the) input string.
(Your algorithm surely will be making calls to the 2-enumerator---possibly
quite a few calls.)

Do be careful here.  Proving this may take about three 
``aha!\ldots~insight''~moments;
Section~\ref{ss:food} gave
slight hints regarding two of those.

So that you have them easily at hand while working on this,
here are some of the key definitions and tools that you might want
to draw on while trying to prove this theorem.

\begin{description}
  \item[\#SAT]
    $\sharpsat$ is the function that given as input a
    Boolean formula $F(x_1,x_2,\dots,x_k)$---without loss of generality assume that each
    of the variables occurs in $F$---outputs the number of
    satisfying assignments the formula has (i.e., of the $2^k$
    possible assignments of the variables to $\true$/$\false$, the number of
    those under which $F$ evaluates to $\true$; so the output will be
    a natural number in the interval $[0,2^k]$).
    For example,
    $\sharpsat(x_1 \lor x_2) = 3$ and 
    $\sharpsat(x_1 \land \overline{x_1}) = 0$.

  \item[Enumerative approximation]
   We say
    that $\sharpsat$ has a polynomial-time $2$-enumerator (aka, is polynomial-time
    2-enumerably approximable) if there
    is a polynomial-time computable function $h$ such that on each input $x$,
    (a)~$h(x)$ outputs a list of two (perhaps identical) natural numbers, and
    (b)~$\sharpsat(x)$ appears in the list output by $h(x)$.

\end{description}

    My suggestion to you would be to work on proving this theorem
    until either you find a proof, or you've put in at least 30--60 minutes
    of thought, are stuck, and don't think that more time will be helpful.

    When you've reached one or the other of those states, please go on to
    Appendix~\ref{s:solution-4}, where you will find first 
    a discussion of what the most tempting dead end here is, why it
    is a dead end, and a tool that will help you avoid the dead end.
    And then you'll be urged to attack the problem again with that extra
    tool in hand.

\section{Going Big: Complexity-Class Implications}\label{c:add-complexity}

During all four of our challenge problems, we focused just on the
concrete problem, $\sat$, in its language version or in its counting
analogue, $\sharpsat$.

However,
the challenge results in fact apply to broader classes of
problems.  Although we (mostly) won't prove those broader results in this
chapter, this section will briefly note some of those (and the reader
may well be able to in most cases easily fill in the proofs).  
The original papers, cited in the ``A Bit of History'' notes,
are an excellent source to go to for more coverage.
None of the claims below, of course, are due to the present
tutorial paper, but rather they are generally right from the
original papers.
Also often of use
for a gentler treatment than the original papers is the textbook,
\emph{The Complexity Theory Companion}~\cite{hem-ogi:b:companion}, in which
coverage related to our four problems can be found in, respectively,
chapters 1, 1~[sic], 3, and (using a different technique and
focusing on a concrete but different target problem) 6.

Let us define the complexity class NP by
$\np = \{ L \mid L \manyone \sat\}$.  NP more commonly is defined as
the class of sets accepted by nondeterministic polynomial-time Turing
machines; but that definition in fact yields the same class of sets as
the alternate definition just given, and would require a detailed
discussion of what Turing machines are.

Recall that $\overline{L}$ denotes the complement of $L$.
Let us define the complexity
class $\conp$ by $\conp = \{ L \mid \overline{L} \in \np\}$.

A set $H$ is said to be hard for a class $\cal C$
if for each set $L \in \cal C$ it holds that $L \manyone H$.  If in
addition $H \in \calc$, then we say that $H$ is $\calc$-complete.  It
is well known---although it takes quite a bit of work to show and
showing this was one of the most important steps in the
history of computer science---that $\sat$ is NP-complete~\cite{coo:c:theorem-proving,lev:j:universal,kar:b:reducibilities}.

The following theorem follows easily from our first challenge theorem,
basically because if some NP-hard set is P-selective, that causes
$\sat$ to be P-selective.  (Why?  Our P-selector function for $\sat$
will simply polynomial-time reduce each of its two inputs to the NP-hard
set, will
run that set's P-selector function on those two strings, and then
will select as the more likely to belong to $\sat$ whichever input string
corresponded to the selected string, and for definiteness will choose
its first argument in the degenerate case where both its arguments map
to the same string.)

\begin{theorem}
  If there exists an $\np$-hard, $\p$-selective set, then $\p = \np$.
\end{theorem}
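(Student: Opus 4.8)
The plan is to derive this theorem from Challenge Problem~1 (``if $\sat$ is $\p$-selective, then $\sat \in \p$''). First I would unpack the hypothesis: let $H$ be an $\np$-hard, $\p$-selective set, and let $f$ be a polynomial-time selector witnessing that $H$ is $\p$-selective. Since $\sat \in \np$ and $H$ is $\np$-hard, there is a polynomial-time computable $g$ with $\sat \manyone H$ via $g$, i.e., $(\forall x)[x \in \sat \iff g(x) \in H]$. The whole game is then to manufacture, from $f$ and $g$, a polynomial-time selector for $\sat$; once that is in hand, Challenge Problem~1 immediately gives $\sat \in \p$, and the $\np$-completeness of $\sat$ upgrades this to $\p = \np$.

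The construction of the $\sat$-selector $f'$ is the one sketched just above the theorem statement. On input formulas $a$ and $b$, compute $g(a)$ and $g(b)$, then compute $s = f(g(a),g(b))$. If $s = g(a)$, output $a$; otherwise output $b$; and in the degenerate case $g(a) = g(b)$, output the first argument $a$. This $f'$ is polynomial-time computable, being a fixed composition of the polynomial-time functions $g$ and $f$ with a constant amount of bookkeeping, and by construction $f'(a,b) \in \{a,b\}$ always holds.

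The hard part---really the only step needing care---is verifying that $f'$ genuinely selects for $\sat$: whenever $\{a,b\} \cap \sat \neq \emptyset$, we need $f'(a,b) \in \sat$. Suppose at least one of $a,b$ lies in $\sat$. Then by the reduction property at least one of $g(a), g(b)$ lies in $H$, so the selector property of $f$ forces $s = f(g(a),g(b)) \in H$. The output formula of $f'$ is one whose image under $g$ equals $s \in H$, so by the reduction property that formula is in $\sat$, exactly as required. The one subtlety is the tie $g(a) = g(b)$: there $s \in H$ means both $g(a),g(b) \in H$, hence both $a,b \in \sat$, so defaulting to $a$ is safe. This establishes that $\sat$ is $\p$-selective.

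Finally, Challenge Problem~1 yields $\sat \in \p$. To close, I would invoke the $\np$-completeness of $\sat$: for any $L \in \np$ there is a polynomial-time $g_L$ with $L \manyone \sat$, and composing $g_L$ with the polynomial-time decision procedure for $\sat$ decides $L$ in polynomial time; thus $\np \subseteq \p$, and since $\p \subseteq \np$ trivially, $\p = \np$.
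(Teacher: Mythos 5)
Your proof is correct and takes essentially the same route as the paper's: you pull the selector for the $\np$-hard set $H$ back through the reduction $g$ (choosing the input whose $g$-image is selected, with the first argument as tie-breaker) to show $\sat$ is $\p$-selective, then invoke Challenge Problem~1 and the $\np$-completeness of $\sat$ to conclude $\p = \np$. Your careful verification of the selector property, including the tie case $g(a)=g(b)$, matches and even fills in the details the paper leaves parenthetical.
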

The converse of the above theorem also holds, since if
$\p=\np$ then $\sat$ and indeed all of NP is P-selective, since
P sets unconditionally are P-selective.

The following theorem follows easily from our second challenge theorem.
\begin{theorem}
  If there exists an $\np$-hard tally set, then $\p = \np$.
\end{theorem}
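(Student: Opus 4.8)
The plan is to reduce this claim directly to our second challenge theorem. First I would observe that, under the definition $\np = \{L \mid L \manyone \sat\}$, the set $\sat$ itself lies in $\np$, since the identity map witnesses $\sat \manyone \sat$. Now suppose $H$ is an $\np$-hard tally set. By the definition of $\np$-hardness, every set in $\np$ many-one reduces to $H$; in particular $\sat \manyone H$. But $H$ is a tally set, so we land exactly in the hypothesis of Challenge Problem~2's theorem: there is a tally set (namely $H$) to which $\sat$ many-one reduces.

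Next I would simply invoke that theorem to conclude $\sat \in \p$. At this point the real work is done, and what remains is to upgrade ``$\sat \in \p$'' to the full statement ``$\p = \np$.'' The inclusion $\p \subseteq \np$ is immediate from the definition of $\np$: any $L \in \p$ reduces to $\sat$ by mapping each accepted input to a fixed satisfiable formula and each rejected input to a fixed unsatisfiable formula, and this map is polynomial-time computable. For the reverse inclusion, I would take an arbitrary $L \in \np$, so that $L \manyone \sat$ via some polynomial-time computable $g$; composing $g$ with the now-available polynomial-time decision procedure for $\sat$ yields a polynomial-time algorithm for $L$, whence $L \in \p$. Thus $\np \subseteq \p$, and combining the two inclusions gives $\p = \np$.

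I do not expect any genuine obstacle here, since the substantive content has already been established in Challenge Problem~2. The only points that require a moment's care are (i)~noting that $\sat \in \np$, so that $\np$-hardness of $H$ actually delivers a reduction \emph{from} $\sat$ to the tally set, and (ii)~checking that a polynomial-time reduction composes with a polynomial-time algorithm to keep everything within $\p$. Both are routine, so the essential ``step'' is really just the recognition that an $\np$-hard tally set forces $\sat$ to reduce to a tally set, at which point the second challenge theorem finishes the job.
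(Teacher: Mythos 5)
Your proposal is correct and matches the paper's intended argument exactly: the paper states this theorem ``follows easily from our second challenge theorem,'' and your derivation---using $\np$-hardness of the tally set $H$ plus $\sat \in \np$ to get $\sat \manyone H$, invoking Challenge Problem~2 to conclude $\sat \in \p$, and then closing both inclusions to obtain $\p = \np$---is precisely that easy derivation, spelled out. No gaps; the routine steps (i) and (ii) you flag are handled correctly under the paper's definition $\np = \{ L \mid L \manyone \sat\}$.
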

The converse of the above theorem also holds.

The following theorem follows easily from our third challenge theorem.
\begin{theorem}
  If there exists a $\conp$-hard sparse set, then $\p = \np$.
\end{theorem}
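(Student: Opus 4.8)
The plan is to obtain this as a short corollary of our third challenge theorem (Challenge Problem~3) via a chain of definitional observations. First I would check that $\overline{\sat}\in\conp$. Since the identity map witnesses $\sat\manyone\sat$, we have $\sat\in\np$; then, applying the definition $\conp = \{L \mid \overline{L}\in\np\}$ with $L=\overline{\sat}$ (so that $\overline{L}=\sat$), we conclude $\overline{\sat}\in\conp$.

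Next, let $S$ be the hypothesized $\conp$-hard sparse set. By the definition of hardness, \emph{every} set in $\conp$ many-one polynomial-time reduces to $S$; in particular $\overline{\sat}\manyone S$. But that is precisely the hypothesis of the third challenge theorem, with this same $S$ serving as the sparse set. Invoking that theorem immediately yields $\sat\in\p$.

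Finally I would upgrade $\sat\in\p$ to $\p=\np$ using the $\np$-completeness of $\sat$. For $\np\subseteq\p$: given any $L\in\np$ we have $L\manyone\sat$ by the definition of $\np$, and composing that polynomial-time reduction with a polynomial-time decision procedure for $\sat$ (which exists because $\sat\in\p$) gives a polynomial-time decision procedure for $L$, so $L\in\p$. The reverse inclusion $\p\subseteq\np$ holds unconditionally: each $L\in\p$ reduces to $\sat$ by mapping accepted strings to a fixed satisfiable formula and rejected strings to a fixed unsatisfiable one (with the two degenerate cases $L=\emptyset$ and $L=\sigmastar$ handled separately). Hence $\p=\np$.

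Honestly, there is no substantive obstacle here; the work was already done in Challenge Problem~3. The one point requiring care—and the only genuine ``insight''—is noticing that it is $\overline{\sat}$, not $\sat$, that must be fed into the earlier theorem. The $\conp$-hardness of $S$ hands us a reduction \emph{from} $\overline{\sat}$, which is exactly why the third challenge theorem (stated for $\overline{\sat}\manyone S$) is the right tool and why the second challenge theorem would not directly apply.
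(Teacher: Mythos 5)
Your proposal is correct and is exactly the argument the paper intends: the paper derives this theorem from Challenge Problem~3 precisely as you do, by observing that a $\conp$-hard sparse set $S$ yields $\overline{\sat}\manyone S$ (since $\overline{\sat}\in\conp$), invoking that theorem to get $\sat\in\p$, and then using the $\np$-completeness of $\sat$ to conclude $\p=\np$. Your write-up merely fills in the routine details (that $\overline{\sat}\in\conp$ under the paper's definitions, and both inclusions of $\p=\np$) that the paper compresses into ``follows easily from our third challenge theorem.''
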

The converse of the above theorem also holds.

To state the complexity-class analogue of the fourth challenge problem
takes a bit more background, since the result is about
function classes rather than language classes.

There is a complexity class,
which we will not define here,
defined by Valiant
and known as $\sharpp$~\cite{val:j:permanent}, that is the set of
functions that count the numbers of accepting paths of what are known
as nondeterministic polynomial-time Turing machines.

Metric reductions give a reduction notion 
that applies to the case of functions rather
than languages, and are defined as follows.
A function
$f: \Sigma^* \rightarrow
\{0,1,2,\dots\}
$ is said to polynomial-time metric
reduce to a function $g: \Sigma^* \rightarrow
\{0,1,2,\dots\}
$ if there
exist two polynomial-time computable functions, $\varphi$ and $\psi$,
such that
$(\forall x \in \Sigma^*)[f(x) =
\psi(x,g(\varphi(x)))]$~\cite{kre:j:optimization}.
(We are assuming
that our output natural numbers are naturally coded in binary.)
We say a function $f$ is hard for
$\sharpp$
with respect to
polynomial-time metric reductions
if for every
$f' \in
\sharpp$ it holds that $f'$ polynomial-time metric reduces
to $f$; if
in addition $f \in 
\sharpp$, we say that $f$ is $\sharpp$-complete with
respect to polynomial-time metric reductions.

With that groundwork in hand, we can now state the analogue,
for counting classes, of our fourth challenge theorem.
Since we have
not defined $\sharpp$ here, we'll state the theorem both in terms of
$\sharpsat$ and in terms of $\sharpp$
(the two statements below in fact
turn out to be
equivalent).

\begin{theorem}
\begin{enumerate}
\item   If there exists a function $f$
  such that $\sharpsat$ polynomial-time metric reduces to $f$
and $f$ has a $2$-enumerator, 
then there is
a polynomial-time algorithm for $\sharpsat$.
\item
  If there exists a function that is $\sharpp$-hard with respect
  to polynomial-time metric reductions and has a $2$-enumerator, 
then there is
  a polynomial-time algorithm for $\sharpsat$.
\end{enumerate}
\end{theorem}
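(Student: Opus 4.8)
The plan is to reduce both parts to the already-established fourth challenge theorem (the Cai--Hemachandra result that a polynomial-time $2$-enumerator for $\sharpsat$ forces a polynomial-time algorithm for $\sharpsat$). The entire game is then to manufacture a polynomial-time $2$-enumerator \emph{for $\sharpsat$ itself} out of the hypotheses, after which the fourth challenge theorem finishes the job immediately.

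For part~1, I would unpack the two hypotheses. The metric reduction from $\sharpsat$ to $f$ hands me polynomial-time computable functions $\varphi$ and $\psi$ with $\sharpsat(x) = \psi(x, f(\varphi(x)))$ for all $x$, and the $2$-enumerator for $f$ hands me a polynomial-time function $h$ that on any input outputs a two-element list containing the true value of $f$ at that input. The key observation is that these pieces compose. On input $x$, I compute $\varphi(x)$, run $h$ on it to obtain a list $\{c_1, c_2\}$ that is guaranteed to contain $f(\varphi(x))$, and then output the list $\{\psi(x,c_1), \psi(x,c_2)\}$. Since $f(\varphi(x))$ equals one of $c_1, c_2$, say $c_i$, the value $\psi(x, c_i) = \psi(x, f(\varphi(x))) = \sharpsat(x)$ sits in my output list; and the whole procedure is a composition of polynomial-time functions, hence runs in polynomial time. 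That is exactly a polynomial-time $2$-enumerator for $\sharpsat$, so the fourth challenge theorem yields the desired polynomial-time algorithm.

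For part~2, I would simply note that $\sharpsat$ itself lies in $\sharpp$ (it counts satisfying assignments, which are precisely the accepting paths of the obvious nondeterministic polynomial-time machine that guesses an assignment and verifies it). Hence any function $f$ that is $\sharpp$-hard with respect to polynomial-time metric reductions is in particular a function to which $\sharpsat$ polynomial-time metric reduces. With that reduction in hand, part~2 becomes an instance of part~1, and we are done.

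I do not expect a genuine obstacle here; the argument is essentially a bookkeeping composition. The one thing to watch is that the list length stays at two---applying $\psi(x,\cdot)$ pointwise to a two-element candidate list produces a two-element candidate list, so the enumerator width is preserved exactly. (More generally, this same argument shows that polynomial-time metric reductions carry $k$-enumerability backwards along the reduction, which is the structural reason the theorem holds.) The only subtlety worth flagging is the degenerate possibility that $\psi(x,c_1) = \psi(x,c_2)$, but that is harmless, since the definition of a $2$-enumerator explicitly permits the two listed numbers to coincide.
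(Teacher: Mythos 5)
Your proposal is correct and is exactly the argument the paper has in mind: the theorem is stated there without an explicit proof (the paper notes the reader ``may well be able to\ldots easily fill in the proofs''), and the intended filling-in is precisely your composition---pull the $2$-enumerator $h$ for $f$ back along the metric reduction $(\varphi,\psi)$ to obtain the $2$-enumerator $x \mapsto \{\psi(x,c_1),\psi(x,c_2)\}$ for $\sharpsat$, then invoke the fourth challenge theorem, with part~2 reducing to part~1 because $\sharpsat \in \sharpp$. This mirrors the composition sketch the paper does give for its P-selectivity class analogue, so there is nothing to add.
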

The converse of each of the theorem parts also holds.  The above
theorem parts (and their converses) even hold if one asks not about
2-enumerators but rather about polynomial-time enumerators that have
no limit on the number of elements in their output lists (aside from the polynomial
limit that is implicit from the fact that the enumerators have only
polynomial
time to write their lists).

\section{Conclusions}
In conclusion, 
self-reducibility
provides a powerful tool with applications across a broad range of settings.

Myself, I have found self-reducibility and its
generalizations to be useful in
understanding topics ranging from election
manipulation~\cite{hem-hem-men:c:search-versus-decision}
to backbones of and backdoors to Boolean
formulas~\cite{hem-nar:c:backbones-opacity,hem-nar:c:backdoors-opacity}
to the complexity of sparse sets~\cite{hem-sil:j:easily-checked},
space-efficient language recognition~\cite{hem-ogi-tod:j:sc},
logspace computation~\cite{hem-jia:j:logspace},
and approximation~\cite{hem-zim:j:balanced,hem-hem:j:holes-and-immunity}.

My guess and hope is that perhaps you too may find self-reducibility
useful in your future work.
    That is,  please, if it is not already there, consider adding this tool
    to \emph{your} personal research toolkit: When you face a problem,
    think (if only for a moment) whether the problem happens to be one where
    the concept of self-reducibility will help you gain insight.
    Who knows?  One of these years, you might be
    happily surprised in finding that your answer to such a question
    is~``Yes!''

    \section*{Acknowledgments}
    I am grateful to the students and faculty at the computer science
    departments of RWTH Aachen University,
Heinrich Heine University D\"usseldorf,
and the University of Rochester.  I ``test drove'' this chapter at
each of those schools in the form of a lecture or lecture series.
Particular thanks go to
Peter
Rossmanith, J\"org Rothe,
and
Muthu Venkitasubramaniam,
who invited me to speak, and to 
Gerhard Woeginger
regarding the counterexample in Appendix~\ref{s:solution-4}.
My warm thanks 
to Ding-Zhu Du, Bin Liu, and Jie Wang for inviting me to contribute to
this project that they have organized in memory of the wonderful Ker-I
Ko, whose work contributed so richly
    to the beautiful, ever-growing tapestry that is complexity theory.

    \bigskip

    \bigskip

  \appendix
\noindent{\Large\bf Appendices}\nopagebreak

\nopagebreak\section{Solution to Challenge Problem
  1}\label{s:solution-1} Before we start on the proof, let us put up a
figure that shows the flavor of a structure that we will use to help
us understand and
exploit $\sat$'s self-reducibility.  The structure is known as the
self-reducibility tree of a formula.  At the root of
this tree sits the
formula.
At the next level as the root's children, we have the formula with its first
variable assigned to $\true$ and to $\false$.  At the  level below that,
we have the two formulas from the second level, except with each of
\emph{their} first variables (i.e., the second variable of the
original formula) assigned to both $\true$ and $\false$.  
Figure~\ref{f:self-red} shows the self-reducibility tree of a
two-variable formula.
    \begin{figure}[tbp]
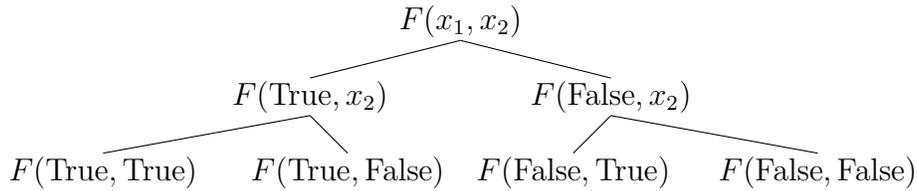

      \ctikzfig{fig-self-red}
      \caption{\label{f:self-red}The self-reducibility tree (completely unpruned) of a two-variable formula, represented generically.}
    \end{figure}

    Self-reducibility tells us that, for each node $N$ in such a
    self-reducibility tree (except the leaves, since they have no
    children), $N$ is satisfiable if and only if at least one of its
    two children is satisfiable.  Inductively, the formula at the
    root of the tree is satisfiable if and only if each level of the
    self-reducibility tree has at least one satisfiable node.  And,
    also, the formula at the root of the tree is satisfiable if and
    only if every level of the self-reducibility tree has at least one
    satisfiable node.

    How helpful is this tree?  Well, we certainly don't want to solve
    $\sat$ by checking every leaf of the self-reducibility tree.  On
    formulas with $k$ variables, that would take time at least
    $2^k$---basically a brute-force exponential-time algorithm.  Yuck!
  That isn't surprising though.  After all, the tree is really just
  listing all assignments to the formula.

  But the magic here, which we will exploit, is that the
  ``self-reducibility'' relationship between nodes and their children
  as to satisfiability will, at least with certain extra assumptions
  such as about P-selectivity, allow us to \emph{not} explore the
  whole tree.  Rather, we'll be able to prune away, quickly, all but a
  polynomially large subtree.  In fact, though on its surface this
  chapter is about four questions from complexity theory, it really is
  about tree-pruning---a topic more commonly associated with algorithms
  than with complexity.  To us, though, that is not a problem but an
  advantage.  As we mentioned earlier, complexity is largely about
  building algorithms, and
  that helps make
  complexity far more inviting and intuitive than most people realize.

  That being said, let us move on to giving a proof of
  the first challenge problem.  Namely, in this section
  we sketch a proof of the result:
  \begin{quote}
      If $\sat$ is $\p$-selective, then $\sat \in \p$.
    \end{quote}

    So assume that $\sat$ is P-selective, via (in the sense of
    Definition~\ref{d:p-sel}) polynomial-time computable function $f$.  Let us
    give a polynomial-time algorithm for $\sat$.  Suppose the input to
    our algorithm is the formula $F(x_1,x_2,\dots,x_k)$.  (If the
    input is not a syntactically legal formula we immediately reject,
    and if the input is a formula that has zero variables, e.g.,
    $\true \land \true \land \false$, we simply evaluate it and accept
    if and only if it evaluates to $\true$.)  Let us focus on $F$
    and $F$'s two children in the self-reducibility tree, as shown in Figure~\ref{f:psel}.
    \begin{figure}[tbp]
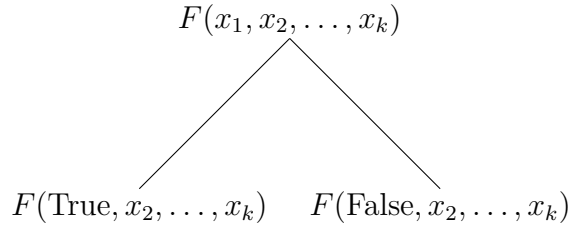

      \ctikzfig{fig-psel}
      \caption{\label{f:psel}$F$ and $F$'s two children.}
    \end{figure}
    
    Now, run $f$ on $F$'s two children.  That is,
    compute, in polynomial time,
    $f(
    F(\true,x_2,\dots,x_k),\,
    F(\false,x_2,\dots,x_k))$.  Due to the properties of 
    P-selectivity and self-reducibility, 
    note that
    the output of that application of $f$ is a formula/node that
    has the property that the original formula is satisfiable
    if and only if that child-node is satisfiable.  
    
In particular, if     $f(
    F(\true,x_2,\dots,x_k),\,
    F(\false,x_2,\dots,x_k)) =     F(\true,x_2,\dots,x_k)$
    then we know that 
    $F(x_1,x_2,\dots,x_k)$ is satisfiable if and only if
    $F(\true,x_2,\dots,x_k)$ is satisfiable.  And if $f(
    F(\true,x_2,\dots,x_k),\,
    F(\false,x_2,\dots,x_k)) \neq     F(\true,x_2,\dots,x_k)$
    then we know that 
    $F(x_1,x_2,\dots,x_k)$ is satisfiable if and only if
    $F(\false,x_2,\dots,x_k)$ is satisfiable.

    Either way, we have in time polynomial in the input's size eliminated the
    need to pay attention to one of the two child nodes, and now may
    focus just on the other one.

    Repeat the above process on the child that, as per the above, was
    selected by the selector function.  Now, ``split'' that formula by
    assigning $x_2$ both possible ways.  That will create two
    children, and then analogously to what was done above, use the
    selector function to decide which of those two children is the
    more promising branch to follow.

    Repeat this until we have assigned all variables.  We now have a
    fully assigned formula, but due to how we got to it, we know that
    it evaluates to $\true$ if and only if the original formula is
    satisfiable.  So if that fully assigned formula evaluates to
    $\true$,then we state that the original formula is satisfiable (and
    indeed, our path down the self-reducibility tree has outright
    put into our hands a satisfying assignment).  And, more
    interestingly, if the fully assigned formula evaluates to
    $\false$, then we state that the original formula is not satisfiable.
    We are correct in stating that,
    because at each iterative stage we know that if the formula we
    start that stage focused on is satisfiable, then the child the
    selector function chooses for us will also be satisfiable.

    The process above is an at most polynomial number of at most
    polynomial-time ``descend one level having made a linkage''
    stages, and so overall itself runs in polynomial time.  Thus we
    have given a polynomial-time algorithm for $\sat$, under the
    hypothesis that $\sat$ is P-selective.  This completes the proof
    sketch.

    Our algorithm was mostly focused on tree pruning.  Though
    $F$ induces a giant binary tree as to doing variable assignments
    one variable at a time in all possible ways, thanks to the
    guidance of the selector function, we walked just a
    single path through that tree.

    Keeping this flavor of approach in mind might be helpful on
    Challenge Problem~2, although that is a different problem and
    so perhaps you'll have to bring some new twist, or greater
    flexibility, to what you do to tackle that.

    And now, please pop right on back to the main body of the
    chapter, to read and tackle
    Challenge Problem~2!

  \section{Solution to Challenge Problem 2}\label{s:solution-2}    
  In this section we sketch a proof of the result:
  \begin{quote}
If there exists a tally set $T$
  such that $\sat \manyone T$, then $\sat \in \p$. 
    \end{quote}

    So assume that there exists a tally set $T$ such that
    $\sat \manyone T$.
Let $g$ be the polynomial-time computable
    function performing that reduction, in the sense of
    Definition~\ref{d:reduce}.
    (Keep in mind that we may \emph{not}
    assume that $T \in \p$.  We have no argument line in hand
    that would tell us
    that that
    happens to be true.)
    Let us give a polynomial-time algorithm for $\sat$.

    Suppose the input to
    our algorithm is the formula $F(x_1,x_2,\dots,x_k)$.  (If the
    input is not a syntactically legal formula we immediately reject,
    and if the input is a formula that has zero variables
    we simply evaluate it and accept
    if and only if it evaluates to $\true$.)

    Let us focus first on $F$.  Compute, in polynomial time,
    $g(F(x_1,x_2,\dots,x_k))$.  
    If $g(F(x_1,x_2,\dots,x_k)) \not\in \{\epsilon,0,00,\dots\}$, then
    clearly $F(x_1,x_2,\dots,x_k) \not\in\sat$, since we know that
    (a)~$T \subseteq \{\epsilon,0,00,\dots\}$ and 
    (b) $F(x_1,x_2,\dots,x_k) \in \sat \iff
    g(F(x_1,x_2,\dots,x_k)) \in T$.  So in that case, we output that 
    $F(x_1,x_2,\dots,x_k) \not\in\sat$.  Otherwise, we descend to the
    next level of the ``self-reducibility tree'' as follows.

    We consider the nodes (i.e., in this case, formulas)
    $F(\true,x_2,\dots,x_k)$
    and 
    $F(\false,x_2,\dots,x_k)$.
    Compute 
    $g(F(\true,x_2,\dots,x_k))$ and 
    $g(F(\false,x_2,\dots,x_k))$.
    If either of our two nodes in question does not, under the action just
    computed of $g$, map to a string in $\{\epsilon,0,00,\dots\}$, then
    that node certainly is not a satisfiable formula, and we can
    henceforward mentally ignore it and the entire tree (created
    by assigning more of its variables) rooted at it.  This is one
    key type of pruning that we will use: eliminating from consideration
    nodes that map to ``nontally'' strings.

    But there is a second type of pruning that we will use: If it happens to
    be the case that $g(F(\true,x_2,\dots,x_k)) \in
    \{\epsilon,0,00,\dots\}$ and 
    $g(F(\true,x_2,\dots,x_k)) = 
    g(F(\false,x_2,\dots,x_k))$, then at this point it may not be clear to 
    us whether
    $F(\true,x_2,\dots,x_k)$ is or is not satisfiable.
    However, what is clear is that 
    \[F(\true,x_2,\dots,x_k) \in \sat \iff 
    F(\false,x_2,\dots,x_k) \in \sat.\]
    How do we know this?  Since $g$ reduces $\sat$ to $T$, we know
    that 
    \[g(F(\true,x_2,\dots,x_k)) \in T \iff
    F(\true,x_2,\dots,x_k) \in \sat\]
    and 
    \[g(F(\false,x_2,\dots,x_k)) \in T \iff
    F(\false,x_2,\dots,x_k) \in \sat.\]
    By those observations, the fact that
    $g(F(\true,x_2,\dots,x_k)) = 
    g(F(\false,x_2,\dots,x_k))$, and the transitivity of 
``${\iff}$'', we indeed have that 
    $F(\true,x_2,\dots,x_k) \in \sat \iff
    F(\false,x_2,\dots,x_k) \in \sat$.  But since that says that either both or 
    neither of these nodes is a formula belonging to $\sat$, there is
    no need at all for us to further explore more than one of them, since
    they stand or fall together as to membership in $\sat$.  So
    if we have 
    $g(F(\true,x_2,\dots,x_k)) = 
    g(F(\false,x_2,\dots,x_k))$, we can mentally dismiss
    $F(\false,x_2,\dots,x_k)$---and of course also the
    entire subtree rooted at it---from all further consideration.

    After doing the two types of pruning just mentioned, we will have
    either one or two nodes left at the level of the tree---the level
    one down from the root---that we are considering.  (If we have
    zero nodes left, we have pruned away all possible paths and can
    safely reject).  Also, if $k = 1$, then we can simply check
    whether at least one node that has not been pruned away evaluates
    to $\true$, and if so we accept and if not we reject.

    But what we have outlined can iteratively be carried out in a way
    that drives us right down through the tree, one level at a time.
    At each level, we take all nodes (i.e., formulas; we will speak
    interchangeably of the node and the formula that it is
    representing) that have not yet been eliminated from
    consideration, and for each, take the next unassigned variable and
    make two child formulas, one with that variable assigned to
    $\true$ and one with that variable assigned to $\false$.  So if at
    a given level after pruning we end up with $j$ formulas, we in
    this process start the next level with $2j$ formulas, each with
    one fewer variable.  Then for those $2j$ formulas we do the
    following: For each of them, if $g$ applied to that formula
    outputs a string that is not a member of
    $\{\epsilon,0,00,\dots\}$, then eliminate that node from all
    further consideration.  After all, the node clearly is not a
    satisfiable formula.  Also, for all nodes among the $2j$ such that
    the string $z$ that $g$ maps them to belongs to $\{\epsilon,0,00,\dots\}$
    and $z$ is  mapped to by $g$ by at least one other of the $2j$ nodes,
    for each such cluster of nodes that map to the same string $z$ (of the form
    $\{\epsilon,0,00,\dots\}$) eliminate all but one of the nodes
    from consideration.  After all, by the argument given above,
    either all of that cluster are satisfiable or none of them are,
    so we can eliminate all but one from consideration, since eliminating
    all the others still leaves one that is
    satisfiable,
    if in fact the nodes in the cluster are satisfiable.

    Continue this process until (it internally terminates with a decision, or)
    we reach a level where all variables
    are assigned.  If there were $j$ nodes at the level above that after
    pruning, then
    at this no-variables-left-to-assign level we have at most $2j$ formulas.
    The construction is such that $F(x_1,x_2,\dots,x_k) \in \sat$
    if and only if at least one of these at most $2j$ variable-free formulas
    belongs to $\sat$, i.e., evaluates to $\true$.  But we can easily
    check that in time polynomial in $2j\times |F(x_1,x_2,\dots,x_k)|$.

    Is the proof done?  Not yet.  If $j$ can be huge, we're dead, as
    we might have just sketched 
    an exponential-time algorithm.  But fortunately, and this was
    the key insight in Piotr Berman's paper that proved this result, as we
    go down the tree, level by level, the tree \emph{never} grows too wide.
    In particular, it is at most polynomially wide!

    How can we know this?
    The insight that Berman (and with luck, also you!)~had is that there are not many
    ``tally'' strings that can be reached by the reduction $g$ on the
    inputs that it will be run on in our construction on a given input.
    And that fact
    will ensure us that after we do our two kinds of pruning, we have at most
    polynomially many strings left at the now-pruned level.

    Let us be more concrete about this, since it is not just the heart of
    this problem's solution, but also might well (\emph{hint!, hint!}) be
    useful when tacking the third challenge problem.

    In particular, we know that $g$ is polynomial-time computable.  So
    there certainly is some natural number $k$ such that, for each
    natural number $n$, the function $g$ runs in time at most $n^k +k$ on all
    inputs of length $n$.  Let $m = |F(x_1,x_2,\dots,x_k)|$.  Note
    that, at least if the encoding scheme is reasonable and
    we if needed do reasonable, obvious simplifications 
    (e.g., $\true \land y \equiv y$,
    $\true \lor y \equiv \true$, $\neg \true \equiv \false$, and $\neg
    \false \equiv \true$), then each formula in the tree is of
    length less than or equal to $m$.  Crucially, $g$ applied to
    strings of length less than or equal to $m$ can never output
    any string of length greater than $m^k+k$.  And so there are
    at most $m^k +k +1$ strings (the ``+\,1'' is because the empty
    string is one of the strings that can be reached) in 
    $\{\epsilon,0,00,\dots\}$ that can be mapped to by any of
    the nodes that are part of our proof's self-reducibility
    tree when the input is $F(x_1,x_2,\dots,x_k)$.
  So at each level of
    our tree-pruning, we eliminate all nodes that map to strings
    that do not belong to 
    $\{\epsilon,0,00,\dots\}$, and since we leave at most one node
    mapping to each string that is mapped to in 
    $\{\epsilon,0,00,\dots\}$, and as we just argued that there are
    at most $m^k+k+1$ of those, at the end of pruning a given level,
    at most $m^k+k+1$ nodes are still under consideration.  But
    $m$ is the length of our problem's input, so each level, after pruning,
    finishes with at most 
    at most $m^k+k+1$ nodes, and so the level after it, after
    we split each of the current level's nodes, will begin with at most 
    $2(m^k+k+1)$ nodes.  And after pruning \emph{that} level,
    it too ends up with at most 
    $m^k+k+1$ nodes still in play.  The tree indeed remains
    at most polynomially wide.

    Thus when we reach the ``no variables left unassigned'' level, we
    come into it with a polynomial-sized set of possible satisfying
    assignments (namely, a set of at most $m^k+k+1$ assignments), and 
    we know that the original formula is satisfiable if and only if at least one of
    these assignments satisfies $F$.

    Thus the entire algorithm is a polynomial number of  rounds
    (one per variable eliminated), each taking polynomial time.  So
    overall it is a polynomial-time algorithm that it is correctly
    deciding $\sat$.  This completes the proof sketch.

    And now, please pop right on back to the main body of the chapter,
    to read and tackle Challenge Problem~3!  While doing so, please
    keep this proof in mind, since doing so will be useful on
    Challenge Problem~3\ldots~though you also
    will need to discover a quite cool additional insight---the same
    one Steve Fortune discovered when he originally proved the theorem
    that is our Challenge Problem~3.

  \section{Solution to Challenge Problem 3}\label{s:solution-3}
  In this section we sketch a proof of the result:
  \begin{quote}
  If there exists a sparse set $S$
  such that $\overline{\sat} \manyone S$, then $\sat \in \p$.
    \end{quote}

    So assume that there exists a sparse set $S$ such that
    $\overline{\sat} \manyone S$.
    Let $g$ be the polynomial-time computable
    function performing that reduction, in the sense of
    Definition~\ref{d:reduce}.
    (Keep in mind that we may \emph{not}
    assume that $S \in \p$.  We have no argument line in hand
    that would tell us
    that that
    happens to be true.)
    Let us give a polynomial-time algorithm for $\sat$.

    Suppose the input to
    our algorithm is the formula $F(x_1,x_2,\dots,x_k)$.  (If the
    input is not a syntactically legal formula we immediately reject,
    and if the input is a formula that has zero variables
    we simply evaluate it and accept
    if and only if it evaluates to $\true$.)

    What we are going to do here is that we are going to mimic the proof
    that solved Challenge Problem~2.  We are going to go level by level
    down the
    self-reducibility tree, pruning at each level,
    and arguing that the tree never gets too wide---at least if we are
    careful and employ a rather jolting insight that
    Steve Fortune (and with luck, also you!)~had.

    Note that of the two types of pruning we used in the Challenge Problem~2
    proof, one applies perfectly well here.  If two or more nodes on a given
    level of the tree map under $g$ to the same string, we can eliminate
    from consideration all but one of them, since either all of them or
    none of them are satisfiable.

    However, the other type of pruning---eliminating all nodes not
    mapping to a string in
    $\{\epsilon,0,00,\dots\}$---completely disappears here.
    Sparse sets don't have too many strings per level, but the strings
    are not trapped to always being of a specific, well-known form.

    Is the one type of pruning that is left to us enough to keep the
    tree from growing exponentially bushy as we go down it?
    At first glance, it seems that exponential width
    growth is very much possible, e.g.,
    imagine the case that every node of the tree maps to a different
    string than all the others at the node's same level. Then with each
    level our tree would be doubling in size, and by its base, if we
    started with $k$ variables, we'd have $2^k$ nodes at the base
    level---clearly an exponentially bushy tree.

    But Fortune stepped back and realized something lovely.  He
    realized that if the tree ever became too bushy, \emph{then
    that itself would
    be an implicit proof that $F$ is satisfiable}!  Wow;
  mind-blowing!

  In particular, Fortune
  used the following
  beautiful reasoning.

  We know $g$ runs in
    polynomial time.  So let the polynomial $r(n)$ bound $g$'s running
    time on inputs of length $n$, and without loss of generality,
    assume that $r$ is nondecreasing.  We know that $S$ is sparse, so
    let the polynomial $q(n)$ bound the number of strings in $S$ up to
    and including length $n$, and without loss of generality, assume
    that $q$ is nondecreasing.
    
    Let  $m = |F(x_1,x_2,\dots,x_k)|$, and as before, note that all
    the nodes in our proof are of length less than or equal to $m$.  

    How many distinct strings in $S$ can be reached by applying
    $g$ to strings of length at most $m$?  On inputs of length at most
    $m$, clearly $g$ maps to strings of length at most $r(m)$.  But
    note that the number of strings in $S$ of length at most $r(m)$ is
    at most $q(r(m))$.

    Now, there are two cases.  Suppose that at each level of our tree
    we have, after pruning, at most $q(r(m))$ nodes left active.  Since
    $q(r(m))$ itself is a polynomial in the input size, $m$, that
    means our tree remains at most polynomially bushy (since levels of
    our tree are 
    never, even right after splitting a level's nodes to create
    the next level,
    wider than $2q(r(m))$).  
    Analogously to the argument of Challenge Problem~2's proof, when
    we reach the ``all variables assigned'' level, we enter it with a
    set of at most $2q(r(m))$ no-variables-left formulas such that $F$
    is satisfiable if and only if at least one of those formulas
    evaluates to $\true$.  So in that case, we easily do compute
in polynomial time whether the given input is satisfiable, analogously to the previous proof.

On the other hand, suppose that on some level, after pruning, we
    have at least $1+ q(r(m))$ nodes.  This means that at that level, we had
    at least $1+ q(r(m))$
    distinct labels.  But there are only $q(r(m))$ distinct strings
    that $g$ can possibly reach, on our inputs, that belong
    to $S$.  So at least one of the
    $1+ q(r(m))$ formulas in our surviving nodes maps to a string that
    does not belong to $S$.  But $g$ was a reduction from $\overline{\sat}$
    to $S$, so that node that mapped to a string that does not belong to
    $S$ must itself be a satisfiable formula.  Ka-\emph{zam!}  That node is
    satisfiable, and yet that node is simply $F$ with some of its
    variables fixed.  And so $F$ itself certainly is satisfiable.  We are
    done, and so the moment our algorithm finds a level that has
    $1+ q(r(m))$ distinct labels, our algorithm halts and declares that
    $F(x_1,x_2,\dots,x_k)$ is satisfiable.
    
    Note how subtle the action here is.  The algorithm is correct in
    reasoning that, when we have at least $1+q(r(m))$ distinct labels at a
    level, at least one of the still-live nodes at that level must be
    satisfiable, and thus $F(x_1,x_2,\dots,x_k)$ is satisfiable.
    However, the algorithm doesn't know a particular one of those 
at-least-$1+q(r(m))$-nodes that it
    can point to as being satisfiable.  It merely knows that at least
    one of them is.  And that is enough to allow the algorithm to act
    correctly.  (One can, if one wants, extend the above approach to
    actually drive onward to the base of the tree; what one does is
    that at each level, the moment one gets to $1+q(r(m))$ distinct
    labels, one stops handling that level, and goes immediately on to
    the next level, splitting each of those $1+q(r(m))$ nodes into two
    at the next level.  This works since we know that at least one of
    the nodes is satisfiable, and so we have ensured that at least
    node at the next level will be satisfiable.)  This completes the proof
    sketch.

    And now, please pop right on back to the main body of the chapter,
    to read and tackle Challenge Problem~4!  There, you'll be working within a
    related but changed and rather challenging setting: you'll be
    working in the realms of functions and counting.  Buckle up!

  \section{Solution to Challenge Problem 4}\label{s:solution-4}

  \subsection{Why One Natural Approach Is Hopeless}

  One natural approach would be to run the hypothetical 2-enumerator
  $h$ on the input formula $F$ and both of $F$'s $x_1$-assigned
  subformulas, and to argue that \emph{purely based on the two options
    that $h$ gives for each of those three, i.e., viewing the formulas
    for a moment as black boxes} (note: without loss of generality, we may assume
  that each of the three applications of
  the 2-enumerator has two distinct outputs; the other cases are
  even easier), we can either output $\|F\|$ or can identify at least
  one of the subformulas such that we can show a particular 1-to-1
  linkage between which of the two predicted numbers of solutions it
  has and which of the two predicted numbers of solutions $F$ has.
  And then we would iteratively walk down the tree, doing that.

  But the following
example, based on one suggested by 
  Gerhard Woeginger,
  shows
  that
  that
  is
    impossible.  Suppose $h$
    predicts
        outputs $\{0,1\}$ for $F$, and $h$ predicts outputs $\{0,1\}$ for the left subformula,
    and $h$ predicts outputs 
        $\{0,1\}$
        for the right subformula.  That is, for each, it says ``this formula
        either has zero satisfying assignments or has exactly one
        satisfying assignment.''
In this case, note that the values of the root
    can't be, based solely on the numbers the enumerator output, linked 1-to-1 to those of the left
    subformula, since $0$ solutions for the left subformula
    can correspond to a root value of~0~($0+0=0$) or to a root value of~1~($0+1=1$).  The same clearly also holds for the right subformula.

    The three separate number-pairs just don't have enough information
    to make the desired link!  But don't despair: we can
    make~$h$ help us far more powerfully than was done above!

  \subsection{XYZ Idea/Statement}\label{ss:XYZ}

  To get around the obstacle just mentioned, we can try to trick the
  enumerator into giving us \emph{linked/coordinated} guesses!  Let us
   see how to do that.

  What I was thinking of, when I mentioned XYZ in the food-for-thought
  hint (Section~\ref{ss:food}), is the fact that we can efficiently combine two Boolean
  formulas into a new one such that from the number of satisfying
  assignments of the new formula we can easily ``read off'' the number
  of satisfying assignments of both the original formulas.  In 
 fact, it turns out that we can do the combining in such a way that if we
  concatenate the (appropriately padded as needed) bitstrings
  capturing the numbers of solutions of the two formulas, we get the
  (appropriately padded as needed) bitstring capturing the number of
  solutions of the new ``combined'' formula.  We will, when
  $F$ is a Boolean formula, use $\|F\|$
  to denote the number of satisfying assignments of $F$.

\begin{lemma}\label{l:combining}
There are polynomial-time computable functions $\mathrm{combiner}$ and
$\mathrm{decoder}$ such that for any Boolean formulas $F$ 
and $G$,
$\mathrm{combiner}(F,\, G)$ is a Boolean formula and $\mathrm{decoder}(F,G,
\|\mathrm{combiner}(F,G)\|)$ prints $\|F\|,\|G\|$.
\end{lemma}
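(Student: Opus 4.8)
The plan is to construct a single formula $H=\mathrm{combiner}(F,G)$ whose number of satisfying assignments stores $\|F\|$ and $\|G\|$ in disjoint blocks of bits, so that $\mathrm{decoder}$ can recover them by a single integer division. Two mechanisms will do the job. First, \emph{free (dummy) variables} multiply a solution count by a power of two and so act as left-shifts, i.e.\ as padding. Second, a single \emph{selector variable} lets me drop two formulas' solution sets into disjoint regions of the assignment space, so that their counts \emph{add} rather than interact.

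\emph{Construction.} I would first rename $G$'s variables so that they are disjoint from $F$'s; say $F$ uses the $k$ variables in $V_F$ and $G$ uses the $\ell$ variables in $V_G$. Let $w=\ell+1$, introduce fresh padding variables $z_1,\dots,z_w$ and a fresh selector variable $y$, and set $\mathrm{combiner}(F,G)$ to be
\[
H=\Bigl(y \land F \land \bigwedge_{v\in V_G}\overline{v}\Bigr)\;\lor\;\Bigl(\overline{y}\land G \land \bigwedge_{u\in V_F}\overline{u}\land\bigwedge_{i=1}^{w}\overline{z_i}\Bigr).
\]
The two disjuncts are mutually exclusive (one forces $y$ true, the other forces it false), so $\|H\|$ is the sum of their counts. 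In the first disjunct the $V_F$-variables range over the $\|F\|$ satisfying assignments of $F$, the $V_G$-variables are pinned to $\false$, and the $z_i$ are unconstrained, contributing a factor $2^{w}$, for a count of $\|F\|\cdot 2^{w}$; in the second disjunct everything outside $G$ is pinned, giving a count of exactly $\|G\|$. Hence
\[
\|H\|=\|F\|\cdot 2^{w}+\|G\|.
\]
Every variable occurs somewhere in $H$ (the $z_i$, for instance, occur in the second disjunct), so $H$ is a legitimate input for $\sharpsat$.

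\emph{Decoding.} Since $\|G\|\le 2^{\ell}<2^{w}$, the value $\|G\|$ occupies the low $w$ bits of $\|H\|$ and $\|F\|$ occupies the bits above them. So $\mathrm{decoder}$ would recompute $w=\ell+1$ from $G$ and return $\|G\|=\|H\|\bmod 2^{w}$ and $\|F\|=\lfloor \|H\|/2^{w}\rfloor$; in bitstring terms this is precisely the promised concatenation, the binary expansion of $\|H\|$ being the binary expansion of $\|F\|$ followed by the $w$-bit expansion of $\|G\|$. Everything runs in polynomial time: $H$ has only $k+\ell+w+1=O(k+\ell)$ variables, so $\|H\|\le 2^{O(k+\ell)}$ has polynomially many bits, making the division and modulus cheap.

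\emph{Main obstacle.} The delicate points are structural rather than computational. I must guarantee that the two solution sets are genuinely disjoint (handled by the selector $y$) and that nothing ``carries'' out of the $\|G\|$ block into the $\|F\|$ block---which is exactly why the padding width is $w=\ell+1$ rather than $\ell$, keeping $\|G\|$ strictly below $2^{w}$ even when $G$ is a tautology with $\|G\|=2^{\ell}$. The remaining bookkeeping (renaming $G$'s variables, pinning the unused variables to $\false$, and making each introduced variable formally occur) is routine.
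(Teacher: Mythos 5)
Your proposal is correct and is essentially the paper's own construction: a fresh selector variable makes the two disjuncts mutually exclusive, the other formula's variables are pinned inside each disjunct, and padding one bit wider than $G$'s variable count shifts $\|F\|$ into the high-order bits, giving $\|H\|=\|F\|\cdot 2^{\ell+1}+\|G\|$ with $\|G\|\le 2^{\ell}$, decoded by division and remainder. The only cosmetic differences are that you introduce $\ell+1$ brand-new padding variables left free in the $F$-disjunct (and pin variables to $\false$), whereas the paper more economically reuses $G$'s own variables, left free in the $F$-disjunct, plus a single extra variable $z'$ as the padding (and pins $F$'s variables to $\true$).
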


\begin{proof}[Proof Sketch]
Let $F=F(x_1,\ldots,\,x_n)$ and
$G=G(y_1,\ldots,\,y_m)$, where $x_1,\ldots,\,x_n,y_1,\ldots,\,y_m$
are all distinct.  Let $z$ and $z^\prime$ be two
new Boolean variables.  Then
\[
 H=(F\wedge z)\vee (\bar{z}\wedge x_1 \wedge \cdots \wedge
 x_n \wedge G \wedge z^\prime )
\]
gives the desired combination, since
$\|h\|=\|f\|2^{m+1}+\|g\|$ and
$\|g\|\leq 2^m$.
\end{proof}

We can easily extend this technique to combine three, four, or even
polynomially many formulas.

  \subsection{Invitation to a Second Bite at the Apple}
Now that you have in hand the extra tool that is
  Lemma~\ref{l:combining}, this would be a great time, unless
  you already found a solution to the fourth challenge problem,
  to try again to solve the problem.
  My guess is that if you did not already solve the fourth challenge
  problem, then the ideas
  you had while trying to solve it will stand you in good stead when you
  with the combining lemma in hand 
  revisit the problem.

  My suggestion to you would be to work again on proving Challenge
  Problem~4
    until either you find a proof, or you've put in at least 15 more minutes
    of thought, are stuck, and don't think that more time will be helpful.

    When you've reached one or the other of those states, please go on to
    Section~\ref{ss:proof-4} to read a proof of the theorem.

  \subsection{Proof Sketch of the Theorem}\label{ss:proof-4}

Recall that we are trying to prove:
  \begin{quote}
  If $\sharpsat$ is has a polynomial-time $2$-enumerator, then there is
  a polynomial-time algorithm for $\sharpsat$.
  \end{quote}

Here is a quick proof sketch.
  Start with our input formula, $F$, whose number of solutions
  we wish to compute in polynomial time.  
  If~$F$ has
  no variables, we can simply directly output the right number
  of solutions, namely,~1 (if~$F$ evaluates to $\true$), or~0 (otherwise).
  Otherwise, self-reduce formula $F$ on its first variable.
  Using the XYZ trick, twice,
  combine the original formula and the two subformulas into a
  single formula, $H$,
  whose number of solutions gives the number of
solutions of all three.
For example, if our three formulas are $F
= F(x_1
x_2, x_3, \dots)$, $F_\mathit{left}
= F(\true, x_2, x_3, \dots)$,
  and $F_\mathit{right}= F(\false, x_2, x_3, \dots)$, our combined formula can be 
  \[H = \mathrm{combiner}(F,\mathrm{combiner}(F_\mathit{left},F_\mathit{right})),\]
  and the decoding process
 is clear from this and
 Lemma~\ref{l:combining}
(and its proof).
Run the 2-enumerator on $H$.
If either of $H$'s output's two decoded guesses are inconsistent
($a \neq b + c$), then ignore that line and the other one is the truth.  If
both are consistent and agree
  on $\|F\|$, then we're also done.  Otherwise, the two guesses must each be
  internally consistent and 
  the two guesses must disagree on $\|F\|$, and so it follows that the two 
guesses differ in their
claims about at least one of $\|F_\mathit{left}\|$ and
$\|F_\mathit{right}\|$.
Thus if we know
  the number of solutions of that one, shorter formula, we know the number
  of solutions of $\|F\|$.

  Repeat the above on \emph{that} formula, and so on, right on
  down the three, and then (unless
  the process resolves internally or ripples back up earlier)
  at the end we have reached a zero-variable formula and for it
  we by inspection will know how many solutions it has (either 1 or 0), and so using
  that we can ripple our way 
  all the way back up through the tree, using our linkages between
  each level and the next, and thus we now have computed $\|F\|$.
  The entire process is a polynomial
  number of polynomial-time actions, and so runs in polynomial time
  overall.

  That ends the proof sketch, but let us give an 
  example regarding the key step from the proof sketch, as that will
  help make 
  clear what is going on.

\newlength{\myheight}
\newlength{\mywidth}
  \settowidth{\mywidth}{the Guesses}
\addtolength{\mywidth}{18pt}
  \settoheight{\myheight}{Which of the Guesses}
\begin{center}
\begin{tabular}{c  c c c}
\parbox[c][\myheight]{\mywidth}{\centering Which of\\the Guesses} &  $\|F(x_1, x_2, x_3, \dots)\|$  & $\|F(\true, x_2, x_3, \dots)\|$
  &
    $\|F(\false, x_2, x_3, \dots)\|$\\[8pt]
  \toprule
  First &  100 & 83 & 17 \\
  Second & 101 & 85 & 16
\end{tabular}
\end{center}

In this example, note that we can conclude that
$\|F\| = 100$ if $\|F(\false, x_2, x_3, \dots)\| = 17$, and 
$\|F\| = 101$ if $
\|F(\false, x_2, x_3, \dots)\|  = 16$;
and we know that $ \|F(\false, x_2, x_3, \dots)\|
 \in \{16,17\}$.

 So we have in polynomial time completely linked
$ \|F(x_1, x_2, x_3, \dots)\|$
to
the issue of the number of satisfying assignments of the
(after simplifying) shorter 
formula
$F(\false, x_2, x_3, \dots)$.  This completes our example
of the key linking step.
\bibliographystyle{alpha}

\begin{thebibliography}{AHH{\etalchar{+}}93}

\bibitem[AHH{\etalchar{+}}93]{arv-han-hem-koe-loz-mun-ogi-sch-sil-thi:b:sparse}
V.~Arvind, Y.~Han, L.~Hemachandra, J.~K{\"{o}}bler, A.~Lozano, M.~Mundhenk,
  M.~Ogiwara, U.~Sch{\"o}ning, R.~Silvestri, and T.~Thierauf.
\newblock Reductions to sets of low information content.
\newblock In K.~{Ambos-Spies}, S.~Homer, and U.~Sch{\"o}ning, editors, {\em
  Complexity Theory}, pages 1--45. Cambridge University Press, 1993.

\bibitem[Ber78]{ber:c:relate}
P.~Berman.
\newblock Relationship between density and deterministic complexity of
  {NP}-complete languages.
\newblock In {\em Proceedings of the 5th International Colloquium on Automata,
  Languages, and Programming}, pages 63--71. Springer-Verlag {\it Lecture Notes
  in Computer Science \#62}, July 1978.

\bibitem[CH89]{cai-hem:j:enum}
J.-Y. Cai and L.~Hemachandra.
\newblock Enumerative counting is hard.
\newblock {\em Information and Computation}, 82(1):34--44, 1989.

\bibitem[CH91]{cai-hem:j:approx2}
J.-Y. Cai and L.~Hemachandra.
\newblock A note on enumerative counting.
\newblock {\em Information Processing Letters}, 38(4):215--219, 1991.

\bibitem[{Cla}19]{cla:url:p-vs-np-prize}
{Clay Mathematics Institute}.
\newblock Millennium problems (web page), 2019.
\newblock www.claymath.org/millennium-problems, URL verified 2019/3/14.

\bibitem[Coo71]{coo:c:theorem-proving}
S.~Cook.
\newblock The complexity of theorem-proving procedures.
\newblock In {\em Proceedings of the 3rd ACM Symposium on Theory of Computing},
  pages 151--158. ACM Press, May 1971.

\bibitem[For79]{for:j:sparse}
S.~Fortune.
\newblock A note on sparse complete sets.
\newblock {\em SIAM Journal on Computing}, 8(3):431--433, 1979.

\bibitem[Gas19]{gas:j:third-p-vs-np-poll}
W.~Gasarch.
\newblock The third {P$\,$=?$\,$NP} poll\typeout{MINOR PANIC: gasarch poll:
  Missing pages, remove the to-appear}.
\newblock {\em SIGACT News}, 50(1):38--59, 2019.

\bibitem[GH00]{gla-hem:j:clarityII}
C.~Gla{\ss}er and L.~Hemaspaandra.
\newblock A moment of perfect clarity {II}: {C}onsequences of sparse sets hard
  for {NP} with respect to weak reductions.
\newblock {\em SIGACT News}, 31(4):39--51, 2000.

\bibitem[Gla00]{gla:t:sparse}
C.~Gla{\ss}er.
\newblock Consequences of the existence of sparse sets hard for {N}{P} under a
  subclass of truth-table reductions.
\newblock Technical Report TR 245, Institut f\"ur Informatik, Universit\"at
  W\"urzburg, W\"urzburg, Germany, January 2000.

\bibitem[Hem19]{hem:url:power-of-self-reducibility-slides}
L.~Hemaspaandra.
\newblock The power of self-reducibility: {Selectivity}, information, and
  approximation, 2019.
\newblock File set---providing slides and their source code---available online
  at \mbox{\url{http://www.cs.rochester.edu/u/lane/=self-reducibility/}}, URL
  verified 2019/3/14.

\bibitem[HH03]{hem-hem:j:holes-and-immunity}
L.~Hemaspaandra and H.~Hempel.
\newblock P-immune sets with holes lack self-reducibility properties.
\newblock {\em Theoretical Computer Science}, 302(1--3):457--466, 2003.

\bibitem[HHM13]{hem-hem-men:c:search-versus-decision}
E.~Hemaspaandra, L.~Hemaspaandra, and C.~Menton.
\newblock Search versus decision for election manipulation problems.
\newblock In {\em Proceedings of the 30th Annual Symposium on Theoretical
  Aspects of Computer Science}, volume~20, pages 377--388. Leibniz
  International Proceedings in Informatics (LIPIcs), February/March 2013.

\bibitem[HJ97]{hem-jia:j:logspace}
L.~Hemaspaandra and Z.~Jiang.
\newblock Logspace reducibility: Models and equivalences.
\newblock {\em International Journal of Foundations of Computer Science},
  8(1):95--108, 1997.

\bibitem[HN17]{hem-nar:c:backbones-opacity}
L.~Hemaspaandra and D.~Narv\'{a}ez.
\newblock The opacity of backbones.
\newblock In {\em Proceedings of the 31st AAAI Conference on Artificial
  Intelligence}, pages 3900--3906. AAAI Press, February 2017.

\bibitem[HN19]{hem-nar:c:backdoors-opacity}
L.~Hemaspaandra and D.~Narv\'{a}ez.
\newblock Existence versus exploitation: {The} opacity of backbones and
  backdoors under a weak assumption.
\newblock In {\em Proceedings of the 45th International Conference on Current
  Trends in Theory and Practice of Computer Science}, pages 247--259.
  Springer-Verlag {\it Lecture Notes in Computer Science \#11376}, January
  2019.

\bibitem[HO02]{hem-ogi:b:companion}
L.~Hemaspaandra and M.~Ogihara.
\newblock {\em The Complexity Theory Companion}.
\newblock Springer-Verlag, 2002.

\bibitem[HOT94]{hem-ogi-tod:j:sc}
L.~Hemaspaandra, M.~Ogihara, and S.~Toda.
\newblock Space-efficient recognition of sparse self-reducible languages.
\newblock {\em Computational Complexity}, 4(3):262--296, 1994.

\bibitem[HOW92]{hem-ogi-wat:c:sparse}
L.~Hemachandra, M.~Ogiwara, and O.~Watanabe.
\newblock How hard are sparse sets?
\newblock In {\em Proceedings of the 7th Structure in Complexity Theory
  Conference}, pages 222--238. IEEE Computer Society Press, June 1992.

\bibitem[HS95]{hem-sil:j:easily-checked}
L.~Hemaspaandra and R.~Silvestri.
\newblock Easily checked generalized self-reducibility.
\newblock {\em SIAM Journal on Computing}, 24(4):840--858, 1995.

\bibitem[HT03]{hem-tor:b:semifeasible-computation}
L.~Hemaspaandra and L.~Torenvliet.
\newblock {\em Theory of Semi-Feasible Algorithms}.
\newblock Springer-Verlag, 2003.

\bibitem[HZ96]{hem-zim:j:balanced}
L.~Hemaspaandra and M.~Zimand.
\newblock Strong self-reducibility precludes strong immunity.
\newblock {\em Mathematical Systems Theory}, 29(5):535--548, 1996.

\bibitem[Kar72]{kar:b:reducibilities}
R.~Karp.
\newblock Reducibilities among combinatorial problems.
\newblock In R.~Miller and J.~Thatcher, editors, {\em Complexity of Computer
  Computations}, pages 85--103, 1972.

\bibitem[KM81]{ko-moo:j:approx}
K.~Ko and D.~Moore.
\newblock Completeness, approximation, and density.
\newblock {\em SIAM Journal on Computing}, 10(4):787--796, 1981.

\bibitem[Ko82]{ko:j:maximum-value-CAREFUL-selman-did-left-cuts-first-see-comment}
K.~Ko.
\newblock The maximum value problem and {NP} real numbers.
\newblock {\em Journal of Computer and System Sciences}, 24(1):15--35, 1982.

\bibitem[Ko83]{ko:j:self-reducibility-CAREFUL-selman-did-left-cuts-first-see-comment}
K.~Ko.
\newblock On self-reducibility and weak {P}-selectivity.
\newblock {\em Journal of Computer and System Sciences}, 26(2):209--221, 1983.

\bibitem[Ko87]{ko:j:helping}
K.~Ko.
\newblock On helping by robust oracle machines.
\newblock {\em Theoretical Computer Science}, 52(1--2):15--36, 1987.

\bibitem[Kre88]{kre:j:optimization}
M.~Krentel.
\newblock The complexity of optimization problems.
\newblock {\em Journal of Computer and System Sciences}, 36(3):490--509, 1988.

\bibitem[Lev75]{lev:j:universal}
L.~Levin.
\newblock Universal sequential search problems.
\newblock {\em Problems of Information Transmission}, 9(3):265--266, 1975.
\newblock March 1975 translation into English of Russian article originally
  published in 1973.

\bibitem[Mah82]{mah:j:sparse-complete}
S.~Mahaney.
\newblock Sparse complete sets for {NP}: {S}olution of a conjecture of {B}erman
  and {H}artmanis.
\newblock {\em Journal of Computer and System Sciences}, 25(2):130--143, 1982.

\bibitem[Mah86]{mah:b:sparse}
S.~Mahaney.
\newblock Sparse sets and reducibilities.
\newblock In R.~Book, editor, {\em Studies in Complexity Theory}, pages
  63--118. John Wiley and Sons, 1986.

\bibitem[Mah89]{mah:b:icss}
S.~Mahaney.
\newblock The {Isomorphism} {Conjecture} and sparse sets.
\newblock In J.~Hartmanis, editor, {\em Computational Complexity Theory}, pages
  18--46. American Mathematical Society, 1989.
\newblock Proceedings of Symposia in Applied Mathematics \#38.

\bibitem[MP79]{mey-pat:t:int}
A.~Meyer and M.~Paterson.
\newblock With what frequency are apparently intractable problems difficult?
\newblock Technical Report MIT/LCS/TM-126, Laboratory for Computer Science,
  MIT, Cambridge, MA, 1979.

\bibitem[Sch76]{sch:c:self-reducible}
C.~Schnorr.
\newblock Optimal algorithms for self-reducible problems.
\newblock In {\em Proceedings of the 3rd International Colloquium on Automata,
  Languages, and Programming}, pages 322--337. Edinburgh University Press, July
  1976.

\bibitem[Sel79]{sel:j:pselective-tally}
A.~Selman.
\newblock {P}-selective sets, tally languages, and the behavior of polynomial
  time reducibilities on {N}{P}.
\newblock {\em Mathematical Systems Theory}, 13(1):55--65, 1979.

\bibitem[Sel81]{sel:j:some-observations-psel}
A.~Selman.
\newblock Some observations on {NP} real numbers and {P}-selective sets.
\newblock {\em Journal of Computer and System Sciences}, 23(3):326--332, 1981.

\bibitem[Sel82a]{sel:j:ana}
A.~Selman.
\newblock Analogues of semirecursive sets and effective reducibilities to the
  study of {N}{P} complexity.
\newblock {\em Information and Control}, 52(1):36--51, 1982.

\bibitem[Sel82b]{sel:j:reductions-pselective}
A.~Selman.
\newblock Reductions on {N}{P} and {P}-selective sets.
\newblock {\em Theoretical Computer Science}, 19(3):287--304, 1982.

\bibitem[Val79a]{val:j:permanent}
L.~Valiant.
\newblock The complexity of computing the permanent.
\newblock {\em Theoretical Computer Science}, 8(2):189--201, 1979.

\bibitem[Val79b]{val:j:enumeration}
L.~Valiant.
\newblock The complexity of enumeration and reliability problems.
\newblock {\em SIAM Journal on Computing}, 8(3):410--421, 1979.

\bibitem[You92]{you:j:sparse}
P.~Young.
\newblock How reductions to sparse sets collapse the polynomial-time hierarchy:
  A primer.
\newblock {\em SIGACT News}, 23, 1992.
\newblock Part I (\#3, pages 107--117), Part II (\#4, pages 83--94), and
  Corrigendum to Part I (\#4, page~94).

\end{thebibliography}
\newcommand{\etalchar}[1]{$^{#1}$}

\end{document}